\newtheorem{theorem}{Theorem}
\newtheorem{remark}{Remark}
\newtheorem{proposition}{Proposition}
\def\R{\mathbb{R}}
\DeclareMathOperator{\FT}{\mathscr{F}}
\DeclareMathOperator{\iFT}{\mathscr{F}^{-1}}
\definecolor{KMpink}{RGB}{255, 153, 255}
\definecolor{KMgreen}{RGB}{0, 127, 0}
\newlength{\dhatheight}
\newenvironment{proof1}{
    \noindent {\it Proof }}{\hfill$\Box$
}
\begin{document}

\title[Regularized Boussinesq--Ostrovsky equation]{\bf Validity of the weakly-nonlinear solution
of the Cauchy problem for the Boussinesq--Ostrovsky equation}

\author[K.R. Khusnutdinova, K.R. Moore, D.E. Pelinovsky]{K.R. Khusnutdinova$^{1}$, K.R. Moore$^{1}$, and D.E. Pelinovsky$^{2}$}

\address{$^{1}$ Department of Mathematical Sciences, Loughborough University, Loughborough, LE11 3TU, UK \\
$^{2}$ Department of Mathematics and Statistics, McMaster University, Hamilton ON, Canada, L8S 4K1 }

\date{\today}
\maketitle

\begin{abstract}
We consider the initial-value problem for the regularized Boussinesq--Ostrovsky
equation in the class of periodic functions. Validity of the weakly-nonlinear solution,
given in terms of two counter-propagating waves satisfying the uncoupled Ostrovsky equations,
is examined. We prove analytically and illustrate numerically that the improved accuracy of the solution
can be achieved at the time scales of the Ostrovsky equation if solutions of the linearized
Ostrovsky equations are incorporated into the asymptotic solution. Compared to the previous
literature, we show that the approximation error can be controlled in the energy space
of periodic functions and the nonzero mean values of the periodic functions can be
naturally incorporated in the justification analysis.
\end{abstract}

\section{Introduction}
\label{1}

Validity of the long-wave approximation for shallow water waves has been considered
in many recent works. Unbounded spatial domains and classes of decaying initial data were
typically considered. The first results in this direction were found in the context of water
waves by Craig \cite{Craig}, Schneider \cite{S}, Schneider \& Wayne \cite{SW1,SW2},
Ben Youssef \& Colin \cite{YC}, and Lannes \cite{Lannes}. Rigorous justification
analysis was developed to control the approximation error, and the bounds on the error terms were
typically found to be larger than those in the formal asymptotic theory.

Wayne \& Wright \cite{WW} extended this analysis to the regularized
Boussinesq equation to incorporate the first-order correction to the
leading-order approximation and to control the error term in Sobolev spaces
(see \cite{W} for a similar treatment of the original water wave equations).
They also reported numerical approximations that illustrated the validity of the main result,
where the bounds on the approximation error were controlled to be of the same size
as in the formal asymptotic theory.

Comprehensive treatment of the Boussinesq systems was developed by Bona {\em et al.} \cite{Bona},
who explored the case of symmetric Boussinesq systems and justified the long-wave approximation
(where the first-order correction term was added to the leading-order term)
in a number of physical models that included the water-wave equations
and the regularized Boussinesq system. A priori energy estimates
and Gronwall's inequalities were used to control the error term in Sobolev spaces
and to recover the error bounds of the formal asymptotic theory.
Initial data on the infinite line with sufficient decay at infinity were
treated on equal footing with the periodic initial data under
the zero mean-value assumption. The approximation error for the periodic
data with a nonzero mean value was found to be significantly larger.

Recently, in the framework of a system of coupled Boussinesq equations \cite{Karima1},
the long-wave approximation was extended to
the regularized Boussinesq--Ostrovsky equation
\begin{equation}
\label{Bous}
u_{\tau \tau} - u_{\xi \xi} + \delta u = \frac{1}{2} (u^2)_{\xi \xi} + u_{\tau \tau \xi \xi},
\end{equation}
where $\delta > 0$ and the subscripts denote partial differentiation.
The Boussinesq--Ostrovsky equation (\ref{Bous}) arises also in the context of
oceanic waves, which takes into account the effect of background rotation \cite{Gerkema}.
Therefore it is sometimes called the rotation modified Boussinesq equation.
This equation is a two-directional version of the Ostrovsky equation \cite{Ostrovsky},
which constitutes a modification of the Korteweg--de Vries (KdV) equation with an additional term
for $\delta \neq 0$. However, more recently the regularized Boussinesq--Ostrovsky equation
and a system of coupled Boussinesq equations have appeared in the context of a modified Toda lattice
on an elastic substrate  \cite{Kawahara} and  a layered solid waveguide with the soft
bonding layer \cite{Samsonov1,Samsonov2}.

A formal weakly-nonlinear solution of the initial-value problem for
a system of coupled Boussinesq equations on the infinite interval has been
constructed in terms of solutions of coupled and uncoupled Ostrovsky equations
for unidirectional counter-propagating waves in the recent work \cite{Karima1},
depending on the difference between the linear long-wave speeds of the waves.
Numerical simulations showed generation of a radiating solitary wave (in the
case of strong interactions) and strongly nonlinear wave packets (in the case of weak interactions)
from localized initial data.  Radiating solitary waves in coupled KdV equations
were previously observed in \cite{FDG}. The emergence of the strongly nonlinear
wave packets for the Ostrovsky equation was reported in \cite{GH}. A discrete version
of the same phenomenon was studied in \cite{Kawahara}.

Only the weak interaction case related to two uncoupled Ostrovsky equations
is relevant in the case of the scalar Boussinesq--Ostrovsky equation (\ref{Bous}).
Explicit analytical solutions and more detailed numerical simulations were developed in the follow-up work
\cite{Karima2} in the context of the regularized Boussinesq equation (with $\delta = 0$).
In particular, the explicit weakly nonlinear solution of the regularized Boussinesq
equation was constructed for the initial data corresponding to the soliton solution
of the KdV equation. The solution shows  generation of a small counter-propagating
solitary wave, which agrees with the numerical simulations. Explicit analytical solutions
have also been constructed for the initial data in the form of the $N$-soliton solutions
of the KdV equation and their perturbations.
 It was shown in \cite{Karima2} that the error term is significantly smaller when the first-order
correction term is taken into account, but it grows with the time.  No detailed studies
of the convergence rate for the error term were reported in \cite{Karima1,Karima2}.

The purpose of this work is to develop a systematic approach to the derivation
and justification of the error terms of the weakly-nonlinear solutions
for the regularized Boussinesq--Ostrovsky equation when the first-order
correction is added to the leading-order term. We will also give
a systematic comparison of the convergence rates predicted by the theory
and observed in numerical simulations.

Our main results are obtained in the periodic domain, where derivation and justification of the reduced
equations become easy with the use of Fourier series (see \cite{PSW} for similar derivations).
However, the infinite spatial domain can be obtained in the limit to the
infinite period. Therefore, we can include numerical examples that resemble solitary waves.

The novelties of our paper (compared to the results obtained in \cite{Bona}, \cite{Karima1,Karima2}, and \cite{WW})
are the following. First, we discuss in details the role of the nonzero mean value of
the periodic solutions in the justification analysis.
For $\delta = 0$, nonzero mean values were found to degrade the accuracy of the
long-wave approximation \cite{Bona} because no corrections to the wave speeds were
introduced in the uncoupled KdV equations. For $\delta \neq 0$, all solutions of the Ostrovsky equations must
satisfy the zero-mean constraint \cite{Benilov} but the zero-mean constraint does not have
to be assumed for the Boussinesq--Ostrovsky equation. We will show how the mean value can be incorporated
in the long-wave approximation of the regularized Boussinesq--Ostrovsky equation (\ref{Bous}).

The other novelty of our work is that we extend the analysis
to justify the first-order correction terms to the long-wave
approximation on the time scale of the Ostrovsky equation. The first-order correction
terms were written explicitly in the previous works \cite{Karima1,Karima2} but
the validity of these terms at times greater than order one can only be achieved if these terms satisfy
the linearized Ostrovsky equation. The linearized KdV equations
have also appeared in \cite{WW} in the framework of the regularized Boussinesq equation
(with $\delta = 0$). Compared to this work, we give precise expressions in terms of
solutions of the leading order equations to solve the associated
initial-value problem up to the first-order corrections terms and develop the justification
analysis in the energy space of the regularized Boussinesq equation.

Before closing the introduction, we shall also discuss the reductive
perturbation schemes, which were used to obtain the integrable KdV hierarchy
from the shallow water-wave and Boussinesq equations \cite{Manna1,Manna2,Manna3}.
It was later shown in \cite{Kodama1,Kodama2}
that there are obstacles to the asymptotic integrability of the original physical equations
reduced to the integrable KdV hierarchy in the sense that the formal asymptotic expansions
become non-uniform at higher orders of $\epsilon$. Within the framework of our approach,
we do not need to set up the time evolution along higher flows of the KdV hierarchy
if we are only interested in the validity of the first-order correction terms
at the time scale of the KdV equation. In other words, we can fully
control the approximation error within the required order of accuracy of the
asymptotic expansions without analyzing the secular terms in the linearized
KdV equations and the related difference between asymptotically
integrable and non-integrable perturbation terms.

This paper is organized as follows. In Section 2, we set up the long-wave scaling
and analyze dynamics of the nonzero mean value of periodic functions. In Section 3, we describe
the formal asymptotic theory and prove the justification result about the approximation error
of the asymptotic expansion, in the framework of the regularized Boussinesq equation
(with $\delta = 0$). Section 4 extends analysis to the case of the regularized
Boussinesq--Ostrovsky equation (with $\delta > 0$). Section 5 illustrates the main results
with numerical computations. Section 6 concludes the paper.

\section{Long-wave scaling and dynamics of the mean value}

Using the scaling transformation,
$$
u(x,t) = \epsilon U(x,t), \quad x = \sqrt{\epsilon} \xi,
\quad t =  \sqrt{\epsilon} \tau, \quad \delta = \epsilon^2 \gamma, \quad \epsilon > 0,
$$
we rewrite the regularized Boussinesq--Ostrovsky equation (\ref{Bous}) in the equivalent form:
\begin{equation}
\label{BO}
U_{tt} - U_{xx} = \epsilon \left( \frac{1}{2} (U^2)_{xx} + U_{tt xx} - \gamma U \right),
\end{equation}
Throughout this paper, we will interpret the right-hand side as a $\mathcal{O}(\epsilon)$ perturbation
and develop the asymptotic theory in the limit of small $\epsilon$.
In physical settings \cite{Manna1}, the perturbed wave equation (\ref{BO}) has also the
$\mathcal{O}(\epsilon^2)$ correction terms, which need to be incorporated in the asymptotic
theory. However, these modifications are technically straightforward and obey the same
justification analysis, hence we are going to neglect the $\mathcal{O}(\epsilon^2)$
terms in the Boussinesq equation (\ref{BO}).

We consider the initial-value problem with the initial data
\begin{equation}
\label{initial-values}
U |_{t = 0} = F(x), \quad U_t |_{t = 0} = V(x),
\end{equation}
where the given functions $F$ and $V$ are in the class of squared integrable 
$(2L)$-periodic functions. We can expand them into the Fourier series
\begin{equation}
\label{FS-IO}
F(x) = \sum_{n \in \mathbb{Z}} F_n e^{\frac{i \pi n x}{L}}, \quad
V(x) = \sum_{n \in \mathbb{Z}} V_n e^{\frac{i \pi n x}{L}}.
\end{equation}
For the sake of simplification, we assume that $F$ and $V$ are $\epsilon$-independent,
although extension to a general case is also straightforward.

The following local existence result is similar to the local well-posedness theory for
regularized Boussinesq systems \cite{Chen1,Chen2}.

\begin{proposition}
Fix $s > \frac{1}{2}$. For any $(F,V) \in H^s_{\rm per}(-L,L) \times H^{s}_{\rm per}(-L,L)$, 
there exists an $\epsilon$-independent $t_0 > 0$ and a unique solution
$U(t) \in C^1([0,t_0],H^s_{\rm per}(-L,L))$ of the modified
regularized Boussinesq equation (\ref{BO}) with any $\epsilon > 0$ and
$\gamma \geq 0$. \label{proposition-local}
\end{proposition}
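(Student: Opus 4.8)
The plan is to recast the second-order-in-time equation \eqref{BO} as a first-order system in an appropriate function space and then apply a standard Banach fixed-point (contraction mapping) argument. Writing the perturbation terms on the right-hand side of \eqref{BO}, the key structural observation is that the operator $1 - \epsilon \partial_x^2$ acting on $U_{tt}$ is boundedly invertible on every Fourier mode: in Fourier variables its symbol is $1 + \epsilon \pi^2 n^2 / L^2 \geq 1$, so $(1 - \epsilon \partial_x^2)^{-1}$ is a bounded operator on $H^s_{\rm per}$ with norm at most one. Rearranging, I would write
\begin{equation*}
U_{tt} = (1 - \epsilon \partial_x^2)^{-1} \left[ U_{xx} + \epsilon \left( \tfrac{1}{2}(U^2)_{xx} - \gamma U \right) \right] =: \Phi_\epsilon(U),
\end{equation*}
and convert the Cauchy problem with data \eqref{initial-values} into the Duhamel-type integral equation
\begin{equation*}
U(t) = F + t V + \int_0^t (t - s) \, \Phi_\epsilon(U(s)) \, ds.
\end{equation*}

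Next I would set up the contraction on the Banach space $X_{t_0} = C^1([0,t_0], H^s_{\rm per}(-L,L))$, equipped with its natural norm, and show that the integral map defined above is a self-map on a closed ball and a contraction for $t_0$ chosen small enough. The two ingredients needed are: first, that $(1-\epsilon\partial_x^2)^{-1}\partial_x^2$ is bounded on $H^s_{\rm per}$ — crucial here is that the symbol $-\pi^2 n^2 L^{-2}/(1 + \epsilon \pi^2 n^2 L^{-2})$ is bounded uniformly in $n$ (by $1/\epsilon$, but more importantly it does not grow in $n$), so the linear term maps $H^s$ to $H^s$ rather than losing two derivatives; and second, that the nonlinear term is controlled by the algebra property of $H^s_{\rm per}$. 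This is exactly where the hypothesis $s > \tfrac12$ enters: for $s > \tfrac12$ the Sobolev space $H^s_{\rm per}(-L,L)$ is a Banach algebra, so $\|U^2\|_{H^s} \leq C_s \|U\|_{H^s}^2$, and applying $(1-\epsilon\partial_x^2)^{-1}\partial_x^2$ to $U^2$ keeps the result in $H^s$ with the same quadratic bound. Lipschitz continuity of $U \mapsto U^2$ on bounded sets then follows from the bilinear estimate $\|U^2 - W^2\|_{H^s} \leq C_s (\|U\|_{H^s} + \|W\|_{H^s}) \|U - W\|_{H^s}$.

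With these estimates, a routine computation gives bounds of the form $\|\Phi_\epsilon(U)\|_{H^s} \leq C(1 + \|U\|_{H^s}^2)$ and a matching Lipschitz bound on a ball, from which the contraction constant on $X_{t_0}$ is of order $t_0^2$ (or $t_0$, depending on how one splits the $C^1$ norm). Hence for $t_0$ small the map has a unique fixed point, which is the desired local solution; standard bootstrapping shows $U \in C^1([0,t_0], H^s_{\rm per})$ and in fact $U_{tt} \in C([0,t_0], H^s_{\rm per})$ so the equation holds in the strong sense. The one point requiring genuine care, and the main obstacle to the claim as stated, is the assertion that $t_0$ can be chosen \emph{$\epsilon$-independent}. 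The naive estimate produces constants depending on $\epsilon$ through the factor $\|(1-\epsilon\partial_x^2)^{-1}\partial_x^2\|_{H^s \to H^s}$, whose symbol is bounded by $1/\epsilon$ and blows up as $\epsilon \to 0$. The resolution is that one should \emph{not} invert $1-\epsilon\partial_x^2$ on the unperturbed wave term; instead, I would keep $U_{xx}$ on the right and treat $(1-\epsilon\partial_x^2)^{-1} = 1 - \epsilon(1-\epsilon\partial_x^2)^{-1}\partial_x^2 \cdot (\text{correction})$ so that the symbol of $(1-\epsilon\partial_x^2)^{-1}\partial_x^2$ is written as $\partial_x^2$ minus an $\epsilon$-dependent but \emph{uniformly bounded} remainder — concretely the symbol $-\pi^2 n^2 L^{-2}/(1+\epsilon\pi^2 n^2 L^{-2})$ converges pointwise and the operator norm of the full right-hand side map on $H^s$ admits a bound uniform for $\epsilon$ in bounded sets, so the contraction time $t_0$ depends only on $s$, $L$, $\gamma$, and the data norm, not on $\epsilon$. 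Verifying this uniformity mode-by-mode in Fourier space is the technical heart of the argument.
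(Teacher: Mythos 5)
You have correctly identified the crux of the proposition --- the $\epsilon$-independence of $t_0$ --- but your proposed resolution of it does not work, and this is a genuine gap. In your formulation the entire right-hand side, including the linear wave term, is fed into the Picard iteration against the trivial propagator $U = F + tV + \int_0^t (t-s)\Phi_\epsilon(U(s))\,ds$, so the Lipschitz constant of $\Phi_\epsilon$ on $H^s_{\rm per}$ necessarily includes $\| (1-\epsilon\partial_x^2)^{-1}\partial_x^2\|_{H^s\to H^s}$, whose symbol $-k^2/(1+\epsilon k^2)$ has supremum exactly $1/\epsilon$. The contraction condition then forces $t_0 \lesssim \sqrt{\epsilon}$, which is precisely what the proposition forbids. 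Your suggested repair --- rewriting the symbol as ``$\partial_x^2$ minus an $\epsilon$-dependent remainder'' and invoking pointwise convergence of $-k^2/(1+\epsilon k^2)$ to $-k^2$ --- makes matters worse, not better: $\partial_x^2$ is unbounded on $H^s_{\rm per}$, pointwise convergence of symbols does not give a uniform operator bound, and the claim that ``the operator norm of the full right-hand side map on $H^s$ admits a bound uniform for $\epsilon$ in bounded sets'' is false (the norm is $1/\epsilon$, attained in the limit of large Fourier modes). No algebraic rearrangement of $(1-\epsilon\partial_x^2)^{-1}\partial_x^2$ as a perturbation of the identity can succeed, because the operator genuinely loses two derivatives in the limit $\epsilon\to 0$.

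The paper circumvents this by a different splitting: the second-order term is absorbed into the \emph{generator} of the linear flow rather than into the Duhamel source. Writing the equation as $U_{tt} - L_\epsilon U_{xx} + \epsilon\gamma L_\epsilon U = M_\epsilon U^2$ with $L_\epsilon = (1-\epsilon\partial_x^2)^{-1}$ and $M_\epsilon = \tfrac12 \epsilon \partial_x^2 L_\epsilon$, the linear problem $U_{tt} = L_\epsilon U_{xx}$ is solved exactly by the fundamental solution $S(t)$ with Fourier symbol $\sin(k\ell(k)t)/(k\ell(k))$, $\ell(k) = (1+\epsilon k^2)^{-1/2}$, which is bounded uniformly in $\epsilon$ and $t$; the Duhamel integrand then involves only $M_\epsilon$ (symbol $-\tfrac12 \epsilon k^2/(1+\epsilon k^2)$, uniformly bounded by $\tfrac12$ --- note how the explicit factor $\epsilon$ in front of the nonlinearity is exactly what tames the two derivatives) and $\epsilon\gamma L_\epsilon$. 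This is what delivers an $\epsilon$-independent contraction time. Your remaining ingredients (the algebra property of $H^s_{\rm per}$ for $s>\tfrac12$, the bilinear Lipschitz estimate, the bootstrap to $C^1$) are correct and would slot into that framework without change; only the treatment of the linear term needs to be replaced.
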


\begin{proof}
The evolution problem can be written in the operator form:
\begin{equation}
\label{BO-operator}
U_{tt} - L_{\epsilon} U_{xx} + \epsilon \gamma L_{\epsilon} U = M_{\epsilon} U^2,
\end{equation}
where
$$
L_{\epsilon} := (1 - \epsilon \partial_x^2)^{-1}, \quad
M_{\epsilon} := \frac{1}{2} \epsilon \partial_x^2  L_{\epsilon}.
$$
By using Fourier series, we realize that both
operators $L_{\epsilon}$ and $M_{\epsilon}$
are bounded for any $\epsilon > 0$ and $\gamma \geq 0$
with the $\epsilon$-independent bounds:
$$
\| L_{\epsilon} U \|_{L^2_{\rm per}} \leq \| U \|_{L^2_{\rm per}}, \quad
\| M_{\epsilon} U \|_{L^2_{\rm per}} \leq \frac{1}{2} \| U \|_{L^2_{\rm per}}.
$$
Using Duhamel's principle, we rewrite the evolution problem (\ref{BO-operator})
with the initial data (\ref{initial-values}) in the equivalent integral form:
\begin{equation}
\label{BO-operator-solution}
U(t) = S_t(t) \star F + S(t) \star V + \int_0^t S(t - \tau) \star
\left( M_{\epsilon} U^2(\tau) - \epsilon \gamma L_{\epsilon} U(\tau) \right) d \tau,
\end{equation}
where the star denotes the convolution operator and $S(t)$ is the fundamental
solution operator with the Fourier image:
$$
\hat{S}(t) = \frac{\sin(k \hat{\ell}(k) t)}{k \hat{\ell}(k)}, \quad \ell(k) := \frac{1}{\sqrt{1 + \epsilon k^2}}.
$$
Because $S(t)$ and $S_t(t)$ are bounded operators from $L^2_{\rm per}(-L,L)$ to $L^2_{\rm per}(-L,L)$
for any $t \in \mathbb{R}$, the fixed-point iteration method (see, e.g., \cite{Cazenava}) implies that
there exists a unique local solution of the integral equation (\ref{BO-operator}) in the class
\begin{equation}
\label{BO-class}
U(t) \in C([0,t_0],H^s_{\rm per}(-L,L))
\end{equation}
for any $(F,V) \in H^s_{\rm per}(-L,L) \times H^{s}_{\rm per}(-L,L)$
and any $s > \frac{1}{2}$, where $t_0 > 0$ is an $\epsilon$-independent
local existence time. On the other hand, $U_t(t)$ is defined
by differentiation of the integral equation (\ref{BO-operator}):
$$
U_t(t) = S_{tt}(t) \star F + S_t(t) \star V + \int_0^t S_t(t - \tau) \star
\left( M_{\epsilon} U^2(\tau) - \epsilon \gamma L_{\epsilon} U(\tau) \right) d \tau.
$$
Because $S_{tt}(t)$ satisfies the bound
$$
\| S_{tt}(t) \star F \|_{L^2_{\rm per}} \leq \frac{1}{\sqrt{\epsilon}} \| F \|_{L^2_{\rm per}},
\quad \epsilon > 0
$$
and $U(t)$ is defined in the class (\ref{BO-class}), we have
$U_t(t) \in C([0,t_0],H^s_{\rm per}(-L,L))$ for any $\epsilon > 0$.
As a result, $U(t) \in C^1([0,t_0],H^s_{\rm per}(-L,L))$.
\end{proof}

In the long-wave approximation, we will need to extend local solutions
of the Boussinesq equation (\ref{BO}) in the energy space
to the time intervals with $t_0 = \mathcal{O}(\epsilon^{-1})$.
This continuation is achieved with energy methods resulting in the following wave breaking criterion.

\begin{proposition}
Let $U(t) \in C^1([0,t_0],H^1_{\rm per}(-L,L))$ be a local solution in Proposition \ref{proposition-local}.
The solution is extended to the time interval $[0,t_0']$ with $t_0' > t_0$ if
\begin{equation}
\label{breaking-criterion}
M := \sup_{t \in [0,t_0']} \| U(t) \|_{L^{\infty}_{\rm per}} +
\sup_{t \in [0,t_0']} \| U_t(t) \|_{L^{\infty}_{\rm per}} < \infty.
\end{equation}
\label{proposition-continuation}
\end{proposition}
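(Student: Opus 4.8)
The plan is to run the standard energy-plus-continuation argument at a fixed $\epsilon > 0$ and $\gamma \geq 0$: show that the hypothesis $M < \infty$ forces the full energy-space norm $\|U(t)\|_{H^1_{\rm per}} + \|U_t(t)\|_{H^1_{\rm per}}$ to remain finite on $[0,t_0']$, and then reapply the local existence theory of Proposition \ref{proposition-local} to step the solution past any finite time. Since $H^1_{\rm per} \hookrightarrow L^\infty_{\rm per}$ in one dimension, the point of the criterion is that an $L^\infty_{\rm per}$ bound on $U$ (and $U_t$), which is weaker than an a priori $H^1_{\rm per}$ bound, is already enough to drive the energy estimate that closes the loop.

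First I would introduce the energy functional
$$
E(t) = \frac{1}{2} \int_{-L}^{L} \left( U_t^2 + \epsilon U_{tx}^2 + U_x^2 + \epsilon \gamma U^2 \right) dx,
$$
which is equivalent, with $\epsilon$-dependent constants, to $\|U_t\|_{H^1_{\rm per}}^2 + \|U_x\|_{L^2_{\rm per}}^2$. Differentiating in $t$ and integrating by parts on the periodic interval, the linear terms reassemble as $\int_{-L}^{L} U_t \left( U_{tt} - \epsilon U_{ttxx} - U_{xx} + \epsilon \gamma U \right) dx$; invoking equation (\ref{BO}) in the form $(1 - \epsilon \partial_x^2) U_{tt} - U_{xx} + \epsilon \gamma U = \frac{1}{2}\epsilon (U^2)_{xx}$ collapses this to $\frac{dE}{dt} = \frac{1}{2}\epsilon \int_{-L}^{L} U_t (U^2)_{xx}\, dx = -\epsilon \int_{-L}^{L} U\, U_x\, U_{tx}\, dx$ after one more integration by parts. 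The only surviving nonlinearity is this single term, which I would bound by $\epsilon \|U\|_{L^\infty_{\rm per}} \|U_x\|_{L^2_{\rm per}} \|U_{tx}\|_{L^2_{\rm per}} \leq 2\sqrt{\epsilon}\, M\, E(t)$, using $\|U_x\|_{L^2_{\rm per}} \leq \sqrt{2E}$ and $\|U_{tx}\|_{L^2_{\rm per}} \leq \sqrt{2E/\epsilon}$ directly from the definition of $E$.

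Gronwall's inequality then gives $E(t) \leq E(0)\, e^{2\sqrt{\epsilon}\, M t} \leq E(0)\, e^{2\sqrt{\epsilon}\, M t_0'}$, which is finite for every $t \in [0,t_0']$ precisely because $M < \infty$; note $E(0)$ is finite because $U \in C^1([0,t_0],H^1_{\rm per})$. Since the energy controls $\|U_t\|_{H^1_{\rm per}}$ and $\|U_x\|_{L^2_{\rm per}}$, and the remaining piece $\|U\|_{L^2_{\rm per}} \leq \sqrt{2L}\, M$ follows from the $L^\infty_{\rm per}$ bound on $U$, I obtain a uniform bound $\sup_{t \in [0,t_0']} \big( \|U(t)\|_{H^1_{\rm per}} + \|U_t(t)\|_{H^1_{\rm per}} \big) \leq C < \infty$, with $C$ depending only on $M$, $E(0)$, $t_0'$, $\epsilon$, and $L$. (In fact only $\|U\|_{L^\infty_{\rm per}}$ enters the nonlinear term, so both bounds assumed in $M$ are more than sufficient.)

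Finally I would close by contradiction. Let $T_{\max}$ be the maximal existence time in $C^1([0,T_{\max}),H^1_{\rm per})$ and suppose $T_{\max} \leq t_0'$. Because $\|U(t)\|_{H^1_{\rm per}} + \|U_t(t)\|_{H^1_{\rm per}} \leq C$ on $[0,T_{\max})$, I restart Proposition \ref{proposition-local} from the data $(U(t^*),U_t(t^*))$ at some $t^* < T_{\max}$; the fixed-point construction yields an existence time $\tau > 0$ that can be chosen to depend only on the bound $C$ (and on the fixed $\epsilon,\gamma,L$), not on $t^*$. Choosing $t^* > T_{\max} - \tau$ extends the solution beyond $T_{\max}$, contradicting maximality, so $T_{\max} > t_0'$, which is the asserted continuation. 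The main obstacle is exactly this last step: it requires that the local existence time of Proposition \ref{proposition-local} be uniform over data bounded in $H^1_{\rm per} \times H^1_{\rm per}$, which I would verify by tracking how the contraction radius and iteration time depend on the data norm in the fixed-point scheme. The energy computation itself is routine once one observes that $\|U\|_{L^\infty_{\rm per}}$ alone tames the surviving quadratic term.
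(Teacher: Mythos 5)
Your proof is correct and follows the same overall strategy as the paper --- an a priori energy estimate closed by Gronwall's inequality, followed by a restart of the local theory --- but the energy functional you choose is genuinely different, and the difference is worth recording. The paper works with the cubic-augmented energy (\ref{energy-function}), which contains the extra term $\epsilon \int_{-L}^L U U_x^2\, dx$; with that choice the time derivative collapses to $\epsilon \int_{-L}^L U_t U_x^2\, dx$, so $U_{tx}$ never appears in the remainder, and the hypothesis $M<\infty$ is used twice: through $\| U_t \|_{L^{\infty}_{\rm per}}$ to bound the remainder, and through $\| U \|_{L^{\infty}_{\rm per}}$ to guarantee that the non-sign-definite cubic term does not destroy coercivity, i.e.\ that $\| U_x \|_{L^2_{\rm per}}^2 \leq C(M) E(U)$. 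You instead keep a purely quadratic, manifestly coercive energy and land on the remainder $-\epsilon \int_{-L}^L U U_x U_{tx}\, dx$, which needs only the $\| U \|_{L^{\infty}_{\rm per}}$ half of $M$; both computations are valid and both prove the proposition as stated. The price of your version is quantitative: extracting $\| U_{tx} \|_{L^2_{\rm per}} \leq \sqrt{2E/\epsilon}$ costs a factor $\epsilon^{-1/2}$, so your Gronwall rate is $2\sqrt{\epsilon}\, M$ rather than the paper's $\epsilon M C(M)$. This is irrelevant for the qualitative continuation claim, but it degrades the remark that follows the proposition: with $M = \mathcal{O}(\epsilon^{-1/2})$ from Sobolev embedding, the paper's rate keeps the energy under control up to $t_0 = \mathcal{O}(\epsilon^{-1/2})$, whereas yours only up to $t_0 = \mathcal{O}(1)$ --- this is precisely why the cubic term is built into (\ref{energy-function}). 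Two minor points: the energy identity should be justified by the same approximation argument in $H^2_{\rm per}(-L,L)$ that the paper invokes, since for an $H^1_{\rm per}$ solution the integrations by parts are obtained by density; and your final restart step correctly isolates the one thing that must be checked, namely that the local existence time in Proposition \ref{proposition-local} depends only on the norm of the data, which indeed follows from the fixed-point construction there.
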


\begin{proof}
Let us define the energy function
\begin{equation}
\label{energy-function}
E(U) := \int_{-L}^L \left( U_t^2 + U_x^2 + \epsilon \gamma U^2 + \epsilon U_{tx}^2 + \epsilon U U_x^2 \right) dx,
\end{equation}
for any local solution $U(t) \in C^1([0,t_0],H^1_{\rm per}(-L,L))$. Multiplying the Boussinesq equation (\ref{BO})
by $U_t$ and integrating by parts, we obtain
$$
\frac{d E(U)}{d t} = \epsilon \int_{-L}^L U_t U_x^2 dx + 2 \left( U_t U_x + \epsilon U_t U_{ttx}
+ \epsilon U U_t U_x \right) \biggr|_{x = -L}^{x = L}.
$$
By standard approximation arguments in Sobolev space $H^2_{\rm per}(-L,L)$, the trace of the boundary values
is zero and we obtain a priori energy estimate
$$
\frac{d E(U)}{d t} = \epsilon \int_{-L}^L U_t U_x^2 dx
\leq \epsilon \| U_t \|_{L^{\infty}_{\rm per}} \| U_x \|_{L^2_{\rm per}}^2.
$$
Under the condition (\ref{breaking-criterion}), there is $M$-dependent constant $C(M) > 0$
such that
$$
\| U_x \|_{L^2_{\rm per}}^2 \leq C(M) E(U).
$$
By Gronwall's inequality, we then
obtain
$$
E(U) \leq E(U_0) e^{\epsilon M C(M) t}, \quad t \in [0,t_0'],
$$
such that the solution is extended to the time $t_0' > t_0$ in the energy space,
that is, in the class $U(t) \in C^1([0,t_0'],H^1_{\rm per}(-L,L))$.
\end{proof}

\begin{remark}
By Sobolev embedding of $H^1_{\rm per}(-L,L)$ to $L^{\infty}_{\rm per}(-L,L)$,
we have $M = \mathcal{O}(\epsilon^{-1/2})$ and $C(M) = \mathcal{O}(1)$ as $\epsilon \to 0$. The energy method
of Proposition \ref{proposition-continuation} guarantees continuation of the
local solution of Proposition \ref{proposition-local} to the time intervals of
$t_0 = \mathcal{O}(\epsilon^{-1/2})$. However, this is not sufficient as the long-wave
approximation requires us to continue the solution to the time intervals of
$t_0 = \mathcal{O}(\epsilon^{-1})$. We achieve this goal by controlling $M$
with the $\mathcal{O}(1)$ bound as $\epsilon \to 0$
(see the proof of Theorem \ref{theorem-main} below).
\end{remark}

\begin{remark}
D'Alembert solution of the wave equation $U_{tt} - U_{xx} = 0$ (for $\epsilon = 0$)
only requires us to set
$$
(F,V) \in H^1_{\rm per}(-L,L) \times L^2_{\rm per}(-L,L)
$$
to have $U \in C(\mathbb{R},H^1_{\rm per}(-L,L)) \cap C^1(\mathbb{R},L^2_{\rm per}(-L,L))$. However,
we actually need to find a solution in the class $U \in C^1(\mathbb{R},H^1_{\rm per}(-L,L))$
in order to bound all terms in the energy norm (\ref{energy-function}).
This is achieved in Proposition \ref{proposition-local}, which gives
an improved local-wellposedness result for the regularized Boussinesq--Ostrovsky equation
(\ref{BO}).
\end{remark}

We shall now study the dynamics of the mean value of the $(2L)$-periodic solution $U(t) \in C^1([0,t_0],H^1_{\rm per}(-L,L))$.
Integrating equation (\ref{BO}) in $x$ over the period $(2L)$ in the class of sufficiently
smooth $(2L)$-periodic functions, we obtain the evolution equation for the mean value:
\begin{equation}
\label{balance-equation}
\frac{d^2}{d t^2} \int_{-L}^L U(x,t) dx = -\epsilon \gamma \int_{-L}^L U(x,t) dx,
\end{equation}
which shows that
\begin{equation}
\label{mean-value}
\langle U \rangle(t) := \frac{1}{2L} \int_{-L}^L U(x,t) dx =
F_0 \cos(\sqrt{\epsilon \gamma} t) +
V_0 \frac{\sin(\sqrt{\epsilon \gamma} t)}{\sqrt{\epsilon \gamma}},
\end{equation}
where $F_0$ and $V_0$ are mean values of the Fourier series (\ref{FS-IO}).

To eliminate the linear growth of the mean value $\langle U \rangle$ in $t$ for $\gamma = 0$, we
should assume that $V_0 = 0$ (that is, $V$ has zero mean value). If $\gamma > 0$,
the mean value $\langle U \rangle$ is oscillating with the frequency $\omega = (\epsilon \gamma)^{1/2}$,
and hence $F_0$ and $V_0$ can be nonzero in general. However, for any $t \in \mathbb{R}$,
the mean value $\langle U \rangle$ diverges as $\mathcal{O}(\epsilon^{-1/2})$ if $V_0 \neq 0$.
Therefore, in both cases $\gamma = 0$ and $\gamma > 0$, we would like to eliminate
the secular growth of the mean value $\langle U \rangle$ in $t$ by requiring that
\begin{equation}
\label{velocity-constraint}
V_0 = \frac{1}{2L} \int_{-L}^L V(x) dx = 0.
\end{equation}
In this case, the mean value $\langle U \rangle$ is bounded in $t \in \R$
and $\epsilon \in \R_+$ with a uniform limit as $\epsilon \to 0$
for any $\gamma \geq 0$.

\section{Long-wave approximation for $\gamma = 0$}

We shall consider the initial-value problem for the regularized Boussinesq equation,
\begin{equation}
\label{B}
U_{tt} - U_{xx} = \epsilon \left( \frac{1}{2} (U^2)_{xx} + U_{tt xx} \right),
\end{equation}
The initial data are given by (\ref{initial-values}) and (\ref{FS-IO})
subject to the zero-mean velocity constraint (\ref{velocity-constraint}).
By the exact solution (\ref{mean-value}), the mean value of the solution $U$ is constant
in $t$ with $\langle U \rangle = F_0$.

Substituting $U(x,t) = c_0 + \tilde{U}(x,t)$ into the regularized Boussinesq equation
(\ref{B}), where $c_0 := F_0$ and $\tilde{U}$ is the zero-mean part of the $2L$-periodic function $U$,
we obtain the evolution equation
\begin{equation}
\label{B-periodic}
\tilde{U}_{tt} - \tilde{U}_{xx} = \epsilon \left( c_0 \tilde{U}_{xx} +
\frac{1}{2} (\tilde{U}^2)_{xx} + \tilde{U}_{tt xx} \right).
\end{equation}
By Proposition \ref{proposition-local},
there exists a unique local solution $\tilde{U} \in C^1([0,t_0],H^s_{\rm per}(-L,L))$
of the evolution equation (\ref{B-periodic})
for any $(\tilde{F},\tilde{V}) \in H^s_{\rm per}(-L,L) \times H^{s}_{\rm per}(-L,L)$
with $s > \frac{1}{2}$, where $t_0 > 0$ is a local existence time and
the tilded variables denote the zero-mean part of the $(2L)$-periodic functions.

\subsection{Derivation}

We shall look for a formal asymptotic solution of the evolution equation (\ref{B-periodic})
up to and including the $\mathcal{O}(\epsilon^2)$ terms:
\begin{equation}
\label{asymptotic-expansion}
\tilde{U}(x,t) = U_0(x,t) + \epsilon U_1(x,t) + \epsilon^2 U_2(x,t) + \mathcal{O}(\epsilon^3).
\end{equation}
In the formal theory, we collect together terms at each order.

\vspace{0.2cm}

{\bf Order $\mathcal{O}(\epsilon^0)$:} The leading order $U_0$ satisfies the initial-value problem
for the wave equation:
\begin{equation}
\label{zero-order}
\left\{ \begin{array}{l} (\partial_t^2 - \partial_x^2) U_0 = 0, \\
U_0 |_{t = 0} = \tilde{F}, \\
\partial_t U_0 |_{t = 0} = \tilde{V}. \end{array} \right.
\end{equation}
The d'Alembert solution of the wave equation (\ref{zero-order})
consists of a superposition of two counter-propagating waves of zero mean:
\begin{equation}
\label{leading-order-before}
U_0(x,t) = f^-(\xi_-) + f^+(\xi_+), \quad \xi_{\pm} = x \pm t,
\end{equation}
where
\begin{equation}
\label{leading-order-before-ic}
f^{\pm}(\xi_{\pm}) = \frac{1}{2} \tilde{F}(\xi_{\pm}) \pm \frac{1}{2} \partial_{\xi_{\pm}}^{-1} \tilde{V}(s) =
\frac{1}{2} \sum_{n \in \mathbb{Z} \backslash \{0\}} \left( F_n \pm \frac{L V_n}{\pi i n} \right)
e^{\frac{i \pi n \xi_{\pm}}{L}},
\end{equation}
where $\partial_{\xi_{\pm}}^{-1}$ denote the zero-mean anti-derivative of the zero-mean
periodic functions.

\vspace{0.2cm}

{\bf Order $\mathcal{O}(\epsilon)$:} At this point, we should realize that the next-order correction
terms are going to grow linearly in time $t$ unless we will modify the leading-order solution
on a slow time scale $T = \epsilon t$. Therefore, we modify the leading-order
solution (\ref{leading-order-before}) with the slow time variable:
\begin{equation}
\label{leading-order}
U_0(x,t) = f^-(\xi_-,T) + f^+(\xi_+,T), \quad f^{\pm}(\xi_{\pm},T) =
\sum_{n \in \mathbb{Z} \backslash \{0\}} a^{\pm}_n(T) e^{\frac{i \pi n \xi_{\pm}}{L}},
\end{equation}
where
\begin{equation}
\label{ic-kdv}
a^{\pm}_n |_{T = 0} = \frac{1}{2} \left( F_n \pm \frac{L V_n}{\pi i n} \right),
\quad n \in \mathbb{Z} \backslash \{0\}.
\end{equation}

We know the initial data for $f^{\pm}$ in slow time $T = \epsilon t$, but we do not
know yet the evolution equations for $f^{\pm}(\xi_{\pm},T)$.
To derive these equations, we consider the first-order correction terms:
\begin{equation}
\label{first-order}
 \left\{ \begin{array}{l} (\partial_t^2 - \partial_x^2) U_1 =
-2 \partial^2_{t T} U_0 + c_0 \partial_{x}^2 U_0
+ \frac{1}{2} \partial_x^2(U_0^2) + \partial^4_{ttxx} U_0, \\
U_1 |_{t = 0} = 0, \\
\partial_t U_1 |_{t = 0} = - \partial_T U_0 |_{t = 0}. \end{array} \right.
\end{equation}
Using the Fourier series, $U_1(x,t) = \sum_{n \in \mathbb{Z} \backslash \{0\}}
g_n(t) e^{\frac{i \pi n x}{L}}$, we reduce the evolution equation
in the system (\ref{first-order}) to the uncoupled system
of differential equations:
\begin{equation}
\label{first-order-ode}
\frac{d^2 g_n}{d t^2} + \left( \frac{\pi n}{L} \right)^2 g_n = h_n(t),
\end{equation}
where
\begin{eqnarray*}
h_n(t) & = & -\frac{2 \pi i n}{L} \left( \frac{d a^+_n}{d T} e^{\frac{i \pi n t}{L}} - \frac{d a^-_n}{d T}
e^{-\frac{i \pi n t}{L}} \right) + \frac{\pi^4 n^4}{L^4} \left( a^+_n e^{\frac{i \pi n t}{L}} + a^-_n
e^{-\frac{i \pi n t}{L}} \right) \\
& \phantom{t} & - \frac{\pi^2 n^2}{L^2} c_0 \left( a^+_n e^{\frac{i \pi n t}{L}} + a^-_n
e^{-\frac{i \pi n t}{L}} \right)
- \frac{\pi^2 n^2}{L^2} \left( \sum_{k \in \mathbb{Z}\backslash \{0,n\}} a^+_k a^-_{n-k} e^{\frac{i \pi (2k-n) t}{L}} \right)\\
& \phantom{t} & -\frac{\pi^2 n^2}{2 L^2}
\left( \sum_{k \in \mathbb{Z} \backslash \{0,n\}} a^+_k a^+_{n-k} \right) e^{\frac{i \pi n t}{L}}
-\frac{\pi^2 n^2}{2 L^2} \left( \sum_{k \in \mathbb{Z}\backslash \{0,n\}} a^-_k a^-_{n-k} \right) e^{-\frac{i \pi n t}{L}}.
\end{eqnarray*}

The terms $e^{\pm \frac{i \pi n t}{L}}$ in the right-hand-side of differential equations
(\ref{first-order-ode}) are resonant (that is, they induce a linear growth of $g_n(t)$ in a fast time $t$).
To remove these resonant terms, we define uniquely the evolution problem for the Fourier coefficients
of the leading-order solution (\ref{leading-order}):
\begin{eqnarray}
\label{constraints}
\mp \frac{2 \pi i n}{L} \frac{d a^{\pm}_n}{d T}
+ \frac{\pi^4 n^4}{L^4} a^{\pm}_n - \frac{\pi^2 n^2}{L^2} c_0 a^{\pm}_n
- \frac{\pi^2 n^2}{2 L^2} \sum_{k \in \mathbb{Z} \backslash \{0,n\}} a^{\pm}_k a^{\pm}_{n-k} = 0,
\end{eqnarray}
subject to the initial conditions (\ref{ic-kdv}).

In the equivalent differential form, the evolution problem (\ref{constraints})
coincides with the two uncoupled KdV equations
\begin{equation}
\label{KdV}
\frac{\partial}{\partial \xi_{\pm}} \left( \mp 2 \frac{\partial f^{\pm}}{\partial T}
+ \frac{\partial^3 f^{\pm}}{\partial \xi_{\pm}^3}
+ c_0 \frac{\partial f^{\pm}}{\partial \xi_{\pm}}
+ f^{\pm} \frac{\partial f^{\pm}}{\partial \xi_{\pm}} \right) = 0.
\end{equation}

We consider the initial-value problem for the uncoupled KdV
equations (\ref{KdV}) starting with the initial values $f^{\pm} |_{T = 0}$
given by (\ref{leading-order-before-ic}).
By the local and global well-posedness theory for the KdV equation \cite{CKSTT},
there exist unique global solutions $f^{\pm} \in C(\mathbb{R}_+,H^s_{\rm per}(-L,L))$
of the KdV equations (\ref{KdV}) for any $f^{\pm} |_{T = 0} \in H^s_{\rm per}(-L,L)$
with $s \geq -\frac{1}{2}$.

After the constraints (\ref{constraints}) are substituted back to the differential equations
(\ref{first-order-ode}), we obtain the linear inhomogeneous equations
\begin{eqnarray*}
\frac{d^2 g_n}{d t^2} + \left( \frac{\pi n}{L} \right)^2 g_n =
-\frac{\pi^2 n^2}{L^2} \sum_{k \in \mathbb{Z}\backslash\{0,n\}} a^+_k a^-_{n-k} e^{\frac{i \pi (2k-n) t}{L}},
\end{eqnarray*}
subject to the initial conditions
\begin{equation*}
g_n(0) = 0, \quad \partial_t g_n(0) = - \partial_T a_n^+(0) - \partial_T a_n^-(0).
\end{equation*}
This initial-value problem admits the following bounded solution:
$$
g_n(t) = \sum_{k \in \mathbb{Z} \backslash \{0,n\}} \frac{n^2}{4 k (k-n)} a_k^+ a_{n-k}^-
e^{\frac{i \pi (2k-n) t}{L}} + G_n \cos\left(\frac{\pi n t}{L}\right) + H_n  \sin\left(\frac{\pi n t}{L}\right),
$$
where $G_n$ and $H_n$ are constants of integrations to be found from the initial conditions for $g_n$.
Using the previous expression for $g_n$, we rewrite the first-order correction term in the implicit form:
\begin{equation}
\label{first-order-correction}
U_1(x,t) = f_c(x,t) + \phi^-(\xi_-,T) + \phi^+(\xi_+,T),
\end{equation}
where $f_c$ is uniquely defined by
\begin{equation}
\label{f-c-Fourier}
f_c(x,t) = \sum_{n \in \mathbb{Z}\backslash \{0\}} \sum_{k \in \mathbb{Z} \backslash \{0,n\}} A_{n,k}(T)
e^{\frac{i \pi (2k-n) t}{L} + \frac{i \pi n x}{L}},
\end{equation}
where
\begin{equation*}
A_{n,k}(T) := \frac{n^2}{4 k (k-n)} a_k^+(T) a_{n-k}^-(T),
\end{equation*}
and $\phi^{\pm}$ are counter-propagating waves of the wave equation given by
\begin{eqnarray}
\label{phi-pm-Fourier}
\phi^{\pm}(\xi_{\pm},T) = \sum_{n \in \mathbb{Z} \backslash \{0\}} b^{\pm}_n(T) e^{\frac{i \pi n \xi_{\pm}}{L}}.
\end{eqnarray}

An explicit expression for the first-order correction term was derived recently in \cite{Karima1,Karima2}
by averaging the differential equations with respect to the fast time in characteristic coordinates. We check that
our expression (\ref{f-c-Fourier}) coincides with the one derived in \cite{Karima1,Karima2}:
\begin{eqnarray}
f_c(x,t) =  -\frac{1}{4} \left( 2 f^+ f^- + (\partial_{\xi^+} f^+) (\partial_{\xi_-}^{-1} f^-)
+ (\partial_{\xi^-} f^-) (\partial_{\xi_+}^{-1} f^+)\right), \label{expression}
\end{eqnarray}
where $\partial_{\xi_{\pm}}^{-1} f^{\pm}$ denote again the zero-mean anti-derivatives of the zero-mean periodic functions
$f^{\pm}$.

Using the initial conditions for $g_n$, we can express the initial data
for the amplitudes $b_n^{\pm}$ explicitly:
\begin{equation}
\label{ic-kdv-lin}
b^{\pm}_n |_{T = 0} = - \sum_{k \in \mathbb{Z} \backslash \{0,n\}}
\frac{n (n \pm (2k-n))}{8 k (k-n)} \left( a_k^+ a_{n-k}^- \right)|_{T = 0}
\mp \frac{L}{2 i \pi n} \left( \frac{d a_n^+}{d T} + \frac{d a_n^-}{d T}\right) \biggr|_{T = 0}
\end{equation}

\vspace{0.2cm}

{\bf Order $\mathcal{O}(\epsilon^2)$:} The time evolution of $\phi^{\pm}$ with respect to the slow time $T$
is not defined at the $\mathcal{O}(\epsilon)$ order. To derive the time evolution, we consider
the second-order correction terms:
\begin{equation}
\label{second-order}
 \left\{ \begin{array}{l} (\partial_t^2 - \partial_x^2) U_2 =
-2 \partial^2_{t T} U_1 - \partial_T^2 U_0 + c_0 \partial_x^2 U_1 + \partial_x^2(U_0 U_1) +
\partial^4_{ttxx} U_1 + 2\partial^4_{tTxx} U_0, \\
U_2 |_{t = 0} = 0, \\
\partial_t U_2 |_{t = 0} = - \partial_T U_1 |_{t = 0}. \end{array} \right.
\end{equation}

Using the Fourier series again $U_2(x,t) = \sum_{n \in \mathbb{Z} \backslash \{0\}} g_n(t) e^{\frac{i \pi n x}{L}}$,
we reduce the evolution equation in the system (\ref{second-order}) to the uncoupled system
of differential equations:
\begin{equation}
\label{second-order-ode}
\frac{d^2 g_n}{d t^2} + \left( \frac{\pi n}{L} \right)^2 g_n = h_n(t),
\end{equation}
where
\begin{eqnarray*}
h_n(t) & = & -\frac{d^2 a^+_n}{d T^2} e^{\frac{i \pi n t}{L}} - \frac{d^2 a^-_n}{d T^2}
e^{-\frac{i \pi n t}{L}} -\frac{2 \pi i n}{L} \left( \frac{d b^+_n}{d T} e^{\frac{i \pi n t}{L}} - \frac{d b^-_n}{d T}
e^{-\frac{i \pi n t}{L}} \right) \\
& \phantom{t} & - \sum_{k \in \mathbb{Z} \backslash \{0,n\}} \frac{2 i \pi (2k-n)}{L} \frac{d A_{n,k}}{d T}
e^{\frac{i \pi (2k-n) t}{L}} + \frac{\pi^4 n^4}{L^4} \left( b^+_n e^{\frac{i \pi n t}{L}} + b^-_n
e^{-\frac{i \pi n t}{L}} \right) \\
& \phantom{t} & + \sum_{k \in \mathbb{Z} \backslash \{0,n\}} \frac{\pi^4 n^2 (2k-n)^2}{L^4} A_{n,k}
e^{\frac{i \pi (2k-n) t}{L}} - \frac{2 i \pi^3 n^3}{L^3} \left( \frac{d a^+_n}{d T} e^{\frac{i \pi n t}{L}} -
\frac{d a^-_n}{d T} e^{-\frac{i \pi n t}{L}} \right) \\
& \phantom{t} & -\frac{\pi^2 n^2}{L^2} \sum_{k \in \mathbb{Z} \backslash \{0\}} a^+_k
\left( b^+_{n-k} e^{\frac{i \pi n t}{L}} + b^-_{n-k} e^{\frac{i \pi (2k-n) t}{L}} +
\sum_{l \in \mathbb{Z} \backslash \{0,n-k\}} A_{n-k,l} e^{\frac{i \pi (2l + 2k - n) t}{L}} \right) \\
& \phantom{t} & -\frac{\pi^2 n^2}{L^2} \sum_{k \in \mathbb{Z} \backslash \{0\}} a^-_k
\left( b^+_{n-k} e^{\frac{i \pi (n-2k) t}{L}} + b^-_{n-k} e^{-\frac{i \pi n t}{L}} +
\sum_{l \in \mathbb{Z} \backslash \{0,n-k\}} A_{n-k,l} e^{\frac{i \pi (2l - n) t}{L}} \right) \\
& \phantom{t} & - \frac{\pi^2 n^2}{L^2} c_0 \sum_{k \in \mathbb{Z} \backslash \{0,n\}}A_{n,k}
e^{\frac{i \pi (2k-n) t}{L}} - \frac{\pi^2 n^2}{L^2} c_0 \left( b^+_n e^{\frac{i \pi n t}{L}} + b^-_n
e^{-\frac{i \pi n t}{L}} \right).
\end{eqnarray*}

The terms $e^{\pm \frac{i \pi n t}{L}}$ in the right-hand-side of differential equations
(\ref{second-order-ode}) are again resonant.
To remove these terms, we define uniquely the evolution problem for the Fourier coefficients
of the first-order solution (\ref{first-order-correction}):
\begin{eqnarray}
\label{constraints-second-order}
- \frac{d^2 a_n^{\pm}}{d T^2} \mp \frac{2 \pi i n}{L} \frac{d b^{\pm}_n}{d T}
+ \frac{\pi^4 n^4}{L^4} b^{\pm}_n \mp \frac{2i \pi^3 n^3}{L^3} \frac{d a_n^{\pm}}{d T}
- \frac{\pi^2 n^2}{L^2} c_0  b^{\pm}_n \\
\nonumber
- \frac{\pi^2 n^2}{L^2} \sum_{k \in \mathbb{Z} \backslash \{0,n\}} a^{\pm}_k b^{\pm}_{n-k}
- \frac{\pi^2 n^2}{L^2} \sum_{k \in \mathbb{Z} \backslash \{0,n\}} \frac{(n-k)^2}{4 n k} a^{\pm}_n |a^{\mp}_{k}|^2 = 0,
\end{eqnarray}
subject to the initial conditions (\ref{ic-kdv-lin}).

In the equivalent differential form, the evolution equations (\ref{constraints-second-order})
coincide with the linearized KdV equations
\begin{equation}
\label{KdV-lin}
\frac{\partial}{\partial \xi_{\pm}} \left( \mp 2 \frac{\partial \phi^{\pm}}{\partial T}
+ \frac{\partial^3 \phi^{\pm}}{\partial \xi_{\pm}^3} + c_0 \frac{\partial \phi^{\pm}}{\partial \xi_{\pm}}
+ \frac{\partial}{\partial \xi_{\pm}} f^{\pm} \phi^{\pm} \right) =
\frac{\partial^2 f^{\pm}}{\partial T^2} \mp 2 \frac{\partial^4 f^{\pm}}{\partial \xi_{\pm}^3 T}
+ \frac{\partial^2 f^{\pm}_s}{\partial \xi_{\pm}^2},
\end{equation}
where
\begin{eqnarray}
\label{expressions-F-pm}
f^{\pm}_s(\xi_{\pm},T) & = & - \sum_{n \in \mathbb{Z}\backslash\{0\}} \sum_{k \in \mathbb{Z} \backslash \{0,n\}}
\frac{(n-k)^2}{4 n k} |a^{\mp}_{k}(T)|^2 a^{\pm}_n(T)  e^{\frac{i \pi n \xi_{\pm}}{L}} \\
\nonumber
& = & \frac{1}{2} f^{\pm}(\xi_{\pm},T) \left(  \sum_{k \in \mathbb{Z} \backslash \{0\}} |a^{\mp}_{k}(T)|^2 \right) \\
\nonumber
& = & \frac{1}{4L} f^{\pm}(\xi_{\pm},T) \int_{-L}^L |f^{\mp}(\xi,T)|^2 d \xi.
\end{eqnarray}
The initial-value problem for the linearized KdV
equations (\ref{KdV-lin}) starts with the initial values $\phi^{\pm} |_{T = 0}$
given by (\ref{phi-pm-Fourier}) and (\ref{ic-kdv-lin})
(the closed form expressions in terms of the leading order solutions $f^{\pm}$
can be found in \cite{Karima1,Karima2}).
There exists a unique global solution $\phi^{\pm} \in C(\mathbb{R}_+,H^s_{\rm per}(-L,L))$
for any $\phi^{\pm} |_{T = 0} \in H^s_{\rm per}(-L,L)$
with $s \geq -\frac{1}{2}$ provided that the source term on the right-hand-side of
(\ref{KdV-lin}) is sufficiently smooth in $T$ and $\xi_{\pm}$.

After the constraints (\ref{constraints-second-order}) are substituted back into the differential equations
(\ref{second-order-ode}), we can obtain a bounded solution for $U_2(x,t)$. This completes
the construction of the formal asymptotic
expansion (\ref{asymptotic-expansion}) up to and including the $\mathcal{O}(\epsilon^2)$ terms.

\subsection{Justification}

We shall now justify the long-wave approximation. The following theorem gives
the main result of the justification analysis.

\begin{theorem}
Assume that $(F,V) \in H^1_{\rm per}(-L,L) \times H^1_{\rm per}(-L,L)$
subject to the zero-mean constraint (\ref{velocity-constraint}) on $V$.
Fix $s \geq 10$ and let $f^{\pm} \in C(\mathbb{R},H^s_{\rm per}(-L,L))$ be global solutions of the KdV equations
(\ref{KdV}) starting with the initial data (\ref{leading-order-before-ic}).
Let $U_0$ and $U_1$ be given by (\ref{leading-order}) and (\ref{first-order-correction})
with (\ref{f-c-Fourier}), (\ref{phi-pm-Fourier}), and (\ref{ic-kdv-lin}). There is $\epsilon_0 > 0$
such that for all $\epsilon \in (0,\epsilon_0)$ and all $\epsilon$-independent
$T_0 > 0$, there is an $\epsilon$-independent constant $C > 0$ such that
for all $t_0 \in [0,T_0/\epsilon]$, the local solution of the
regularized Boussinesq equation (\ref{B}) satisfies
\begin{equation}
\label{bound-1}
\sup_{t \in [0,t_0]} \| U - c_0 - U_0 - \epsilon U_1 \|_{H^1_{\rm per}} \leq C \epsilon^2 t_0.
\end{equation}
If, in addition, $\phi^{\pm}$ in (\ref{phi-pm-Fourier}) satisfies the linearized KdV equations
(\ref{KdV-lin}) subject to the initial data (\ref{ic-kdv-lin}) and $s$ is sufficiently large,
then for all $\epsilon \in (0,\epsilon_0)$ and all $\epsilon$-independent
$T_0 > 0$, there is an $\epsilon$-independent constant $C > 0$ such that
\begin{equation}
\label{bound-2}
\sup_{t \in [0,T_0/\epsilon]} \| U - c_0 - U_0 - \epsilon U_1 \|_{H^1_{\rm per}} \leq C \epsilon^2.
\end{equation}
\label{theorem-main}
\end{theorem}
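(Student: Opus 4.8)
The plan is to run a residual-plus-energy argument adapted to the dispersive regularization, following the continuation scheme of Proposition \ref{proposition-continuation}. First I would reduce to the zero-mean equation (\ref{B-periodic}) by writing $U = c_0 + \tilde U$, set the approximation $\tilde U_{\rm app} := U_0 + \epsilon U_1$ for (\ref{bound-1}) and later $\tilde U_{\rm app} := U_0 + \epsilon U_1 + \epsilon^2 U_2$ for (\ref{bound-2}), and define the error $W := \tilde U - \tilde U_{\rm app}$. Substituting $\tilde U_{\rm app}$ into (\ref{B-periodic}) and using that $f^\pm$ solve the KdV equations (\ref{KdV}) — which is exactly what cancels the $\mathcal{O}(1)$ and $\mathcal{O}(\epsilon)$ contributions in the multiple-scales expansion with $T=\epsilon t$ — leaves a residual $\mathrm{Res}$ whose leading part is $\epsilon^2$ times the right-hand side of the $U_2$-equation (\ref{second-order}). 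Since $f^\pm,\phi^\pm \in C(\R,H^s_{\rm per})$ with $s$ large, all derivatives and products there are bounded in $L^2_{\rm per}$, so $\|\mathrm{Res}\|_{L^2_{\rm per}} \le C\epsilon^2$; this is where the regularity budget $s\ge 10$ is spent, as $\mathrm{Res}$ contains up to four spatial derivatives of $U_1$ together with the $\partial_T^2 U_0$ and $\partial^4_{ttxx}U_1$ terms.

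The error then solves the linear-plus-quadratic equation $W_{tt} - (1+\epsilon c_0)W_{xx} - \epsilon W_{ttxx} = \epsilon(\tilde U_{\rm app}W)_{xx} + \tfrac{\epsilon}{2}(W^2)_{xx} - \mathrm{Res}$, with $W|_{t=0}=0$ and $\|W_t|_{t=0}\|_{L^2_{\rm per}} = \mathcal{O}(\epsilon^2)$ (resp. $\mathcal{O}(\epsilon^3)$) by the initial conditions built into $U_1$ (and $U_2$). I would multiply by $W_t$, integrate over $(-L,L)$, and integrate by parts using periodicity. The dangerous contributions are those carrying $W_{xx}$ (from $\tilde U_{\rm app}W_{xx}$ and $WW_{xx}$), since $W_{tx}$ is only controlled by $\epsilon^{-1/2}\mathcal{E}^{1/2}$; the remedy, exactly as in the energy (\ref{energy-function}) of Proposition \ref{proposition-continuation}, is to rewrite $W_{tx}W_x = \tfrac12\partial_t(W_x^2)$ so that these terms become exact time-derivatives and are absorbed into a cubic-corrected energy $\mathcal{E}(W) := \int_{-L}^L (W_t^2 + (1+\epsilon c_0)W_x^2 + \epsilon W_{tx}^2 + \epsilon(\tilde U_{\rm app}+W)W_x^2)\,dx$. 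For $\epsilon$ small and $\|\tilde U\|_{L^\infty_{\rm per}}$ bounded this $\mathcal{E}$ is coercive and equivalent to $\|W_t\|_{L^2}^2 + \|W\|_{H^1}^2 + \epsilon\|W_{tx}\|_{L^2}^2$ (via the Poincaré inequality, as $W$ has zero mean). The remaining terms are then bounded by $\epsilon C(M)\mathcal{E}$ and a cubic term $\le \epsilon\|W_t\|_{L^\infty}\mathcal{E}$, plus the forcing $|\int W_t\,\mathrm{Res}\,dx| \le \mathcal{E}^{1/2}\|\mathrm{Res}\|_{L^2}$, giving $\tfrac{d}{dt}\mathcal{E} \le \epsilon C(M)\mathcal{E} + \epsilon\|W_t\|_{L^\infty}\mathcal{E} + 2\mathcal{E}^{1/2}\|\mathrm{Res}\|_{L^2}$.

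To reach the time scale $t=\mathcal{O}(\epsilon^{-1})$ I would run a continuation/bootstrap argument: assuming $\mathcal{E}^{1/2}\le K\epsilon$ on a maximal subinterval, Sobolev embedding gives $\|W\|_{L^\infty}\le C\mathcal{E}^{1/2}=o(1)$ and $\|W_t\|_{L^\infty}\le C\epsilon^{-1/4}\mathcal{E}^{1/2}=o(1)$, so $M=\sup\|\tilde U\|_{L^\infty}+\sup\|\tilde U_t\|_{L^\infty}=\mathcal{O}(1)$ (the $\tilde U_{\rm app}$ part is $\mathcal{O}(1)$ and the $W$ part negligible). This is precisely the $\mathcal{O}(1)$ control of $M$ promised in the Remark following Proposition \ref{proposition-continuation}, which upgrades continuation from $\mathcal{O}(\epsilon^{-1/2})$ to $\mathcal{O}(\epsilon^{-1})$ and renders the cubic term subordinate to $\epsilon C\mathcal{E}$. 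With $\|\mathrm{Res}\|_{L^2}\le C\epsilon^2$, writing $\mathcal{N}=\mathcal{E}^{1/2}$ yields $\tfrac{d\mathcal{N}}{dt}\le C_1\epsilon\mathcal{N}+C_2\epsilon^2$, and Gronwall gives $\mathcal{N}(t)\le C_2 e^{C_1 T_0}\epsilon^2 t$ on $[0,T_0/\epsilon]$, which closes the bootstrap (choosing $K$ after $T_0$) and yields (\ref{bound-1}) after bounding $\|\cdot\|_{H^1}$ by $\mathcal{N}$.

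For the sharp estimate (\ref{bound-2}) I would repeat the argument with $\tilde U_{\rm app}=U_0+\epsilon U_1+\epsilon^2 U_2$. The decisive point is that once $\phi^\pm$ is required to solve the linearized KdV equations (\ref{KdV-lin}), the resonant terms in (\ref{second-order-ode}) are removed, so $U_2$ is bounded with $\|U_2\|_{H^1}=\mathcal{O}(1)$ uniformly in $t$; the residual then improves to $\|\mathrm{Res}\|_{L^2}\le C\epsilon^3$, and the same Gronwall gives $\mathcal{N}(t)\le C\epsilon^3 t\le CT_0\epsilon^2$ on $[0,T_0/\epsilon]$. The triangle inequality $\|U-c_0-U_0-\epsilon U_1\|_{H^1}=\|W+\epsilon^2 U_2\|_{H^1}\le \mathcal{N}+\epsilon^2\|U_2\|_{H^1}\le C\epsilon^2$ then delivers (\ref{bound-2}). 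The main obstacle is the uniform closure of the energy estimate on the long $\mathcal{O}(\epsilon^{-1})$ interval: a naive energy loses a factor $\sqrt{\epsilon}$ through $W_{tx}$ and would only give $e^{C\sqrt{\epsilon}\,t}$ growth (hence control merely to $\mathcal{O}(\epsilon^{-1/2})$, as in the Remark), so the cubic-corrected energy $\mathcal{E}(W)$ together with the bootstrap keeping $M=\mathcal{O}(1)$ is the crux; the secondary technical obstacle is the careful derivative accounting certifying the residual bounds $\mathcal{O}(\epsilon^2)$ and $\mathcal{O}(\epsilon^3)$ and the boundedness of $U_2$, which fixes the regularity requirement $s\ge 10$.
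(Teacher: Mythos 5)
Your proposal is correct and follows essentially the same route as the paper's proof: decompose the solution into the asymptotic approximation plus an error, derive the perturbed Boussinesq equation for the error with an $\mathcal{O}(\epsilon^2)$ residual (controlled by the $s\ge 10$ regularity of $f^{\pm}$), close a Gronwall estimate on a cubic-corrected energy of the type (\ref{energy-function})--(\ref{energy-norm}), use the resulting $\mathcal{O}(1)$ control of the sup-norms to continue the solution to $t_0=\mathcal{O}(\epsilon^{-1})$ via Proposition \ref{proposition-continuation}, and obtain (\ref{bound-2}) by peeling off a bounded $U_2$ and applying the triangle inequality. The only differences are cosmetic (you work with the unscaled error $W=\epsilon^2\hat U$ and place the full $\tilde U$ rather than just $U_0$ in the cubic energy correction), so no further comment is needed.
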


Before proving the theorem, we make some relevant remarks.

\begin{remark}
The bound (\ref{bound-1}) of Theorem \ref{theorem-main} generalizes the result of Theorem 7 in \cite{Bona}
obtained in the context of Boussinesq systems. However, if the authors of \cite{Bona}
restrict their consideration to the zero mean value for the initial data (case (ii') in Theorem 7)
and show that the nonzero mean values do not produce good long-wave approximations
(cases (ii) and (iii) in Theorem 7), we show that the mean value in the initial data for $U|_{t = 0} = F$
can be naturally incorporated in the justification analysis by modifying the
velocity term of the uncoupled KdV equations (\ref{KdV}).
\end{remark}

\begin{remark}
The difference between bounds (\ref{bound-1}) and (\ref{bound-2}) of Theorem \ref{theorem-main}
is in the time scales, for which the first-order correction terms remain valid.
Bound (\ref{bound-1}) shows that the error of the long-wave approximation
is of the $\mathcal{O}(\epsilon^2)$ order at the time scale $t_0 = \mathcal{O}(1)$
but becomes comparable with the $\mathcal{O}(\epsilon)$ first-order correction terms
at the time scale $t_0 = \mathcal{O}(\epsilon^{-1})$. On the other hand, bound (\ref{bound-2}) shows that the
error of the long-wave approximation remains of the $\mathcal{O}(\epsilon^2)$ order
at the time scale $t_0 = \mathcal{O}(\epsilon^{-1})$ if the first-order correction terms
satisfy the linearized KdV equations (\ref{KdV-lin}). This improved result corresponds to Theorem 1.1
in \cite{WW} on the infinite line with the only difference that the justification analysis is performed in the energy space
of the regularized Boussinesq equation (\ref{B}) compared to the space $H^{\sigma} \cap H^{\sigma + 8}$ with $\sigma \geq 4$
used in \cite{WW}.
\end{remark}

\begin{remark}
Figures 3(c) and 5(c) in \cite{Karima2} illustrate the growth of the approximation error
without the account of the linearized KdV equation (\ref{KdV-lin}) at the time scale
$t_0 = \mathcal{O}(\epsilon^{-1})$. Despite the fact that
the first-order correction terms were found to give a smaller approximation error,
the comparable $\mathcal{O}(\epsilon)$ behavior between the first-order
correction terms and the approximation errors was observed on these figures
at the time scale $t_0 = \mathcal{O}(\epsilon^{-1})$.
\end{remark}

\begin{proof1}{\em of Theorem \ref{theorem-main}.}
We shall first prove bound (\ref{bound-1}) on the approximation error.
We work in energy space of the regularized Boussinesq equation (\ref{B})
and write the approximation in the form
\begin{equation}
\label{decomposition-B}
U(x,t) = c_0 + U_0(x,t) + \epsilon U_1(x,t) + \epsilon^2 \hat{U}(x,t),
\end{equation}
where $c_0 = F_0$ is the mean value, $U_0$ and $U_1$ are given explicitly
by (\ref{leading-order}) and (\ref{first-order-correction})
with (\ref{f-c-Fourier}), (\ref{phi-pm-Fourier}), and (\ref{ic-kdv-lin}), $f^{\pm}$
satisfy the uncoupled KdV equations (\ref{KdV}), and $\hat{U}$ is the
error term that depends on $\epsilon$. Substituting the decomposition (\ref{decomposition-B})
into the regularized Boussinesq equation (\ref{B}), we obtain the time evolution
problem for the error term:
\begin{eqnarray}
\label{B-error}
\hat{U}_{tt} - (1 + \epsilon c_0) \hat{U}_{xx} - \epsilon \hat{U}_{ttxx} =
\epsilon \partial_x^2 \left( U_0 \hat{U} + \epsilon U_1 \hat{U} + \frac{1}{2} \epsilon^2 \hat{U}^2  \right) +
\hat{H},
\end{eqnarray}
where the initial data are
\begin{equation}
\label{initial-data-B-error}
\hat{U} |_{t = 0} = 0, \quad \hat{U}_t |_{t = 0} = -\partial_T U_1 |_{T = 0},
\end{equation}
and the source term is
\begin{eqnarray*}
\hat{H} & = & - 2 \partial_t \partial_T U_1 - \partial_T^2 U_0 - \epsilon \partial_T^2 U_1
+ c_0 \partial_x^2 U_1 + (\partial_t + \epsilon \partial_T)^2 \partial_x^2 U_1
+ 2 \partial_t \partial_T \partial_x^2 U_0 + \epsilon \partial_T^2 \partial_x^2 U_0  \\
& \phantom{t} & + \partial_x^2 (U_0 U_1) + \frac{1}{2} \epsilon \partial_x^2 (U_1^2).
\end{eqnarray*}

We use a priori energy estimates (see, e.g., \cite{PelPon,PelSch} for similar applications
of this technique). By an extension of Proposition \ref{proposition-local},
there exists a unique solution
$$
\hat{U} \in C^1([0,t_0],H^1_{\rm per}(-L,L))
$$
of the perturbed regularized Boussinesq equation (\ref{B-error})
for some $\epsilon$-independent $t_0 > 0$ starting with the initial data (\ref{initial-data-B-error})
provided that the source term satisfies
\begin{equation}
\label{requirement-H}
\hat{H} \in C([0,t_0],H^1_{\rm per}(-L,L)).
\end{equation}
By looking at the explicit expression for $\hat{H}$, where
$U_0$ and $U_1$ are given by (\ref{leading-order}) and (\ref{first-order-correction}) and
$f^{\pm} \in C(\mathbb{R},H^s_{\rm per}(-L,L))$ are global solutions of the uncoupled KdV equations
(\ref{KdV}), we realizes that the term of the highest regularity is $\partial_T^2 \partial_x^2 U_1 \sim
\partial_{\xi_{\pm}}^9 f^{\pm}$, hence $\hat{H}$ is from the class (\ref{requirement-H}) if $s \geq 10$.

Let us introduce the energy at the local solution $\hat{U} \in C^1([0,t_0],H^1_{\rm per}(-L,L))$
of the perturbed regularized Boussinesq equation (\ref{B-error}):
\begin{equation}
\label{energy-norm}
\hat{E} = \int_{-L}^L \left( \hat{U}_t^2 + (1 + \epsilon c_0) \hat{U}_x^2 +
\epsilon \hat{U}_{tx}^2 + \epsilon U_0 \hat{U}_x^2
+ 2 \epsilon U_{0x} \hat{U} \hat{U}_x \right) dx.
\end{equation}
Multiplying (\ref{B-error}) by $\hat{U}_t$, integrating in $x$ over $[-L,L]$,
and using approximation arguments in Sobolev spaces of higher regularity, we obtain
$$
\frac{1}{2} \frac{d \hat{E}}{dt} = \int_{-L}^L \left[ \hat{H} \hat{U}_t
+ \epsilon \left( U_{0xt} \hat{U} \hat{U}_x + U_{0x} \hat{U}_t \hat{U}_x +
\frac{1}{2} U_{0t} \hat{U}^2_x \right)
- \epsilon^2 \hat{U}_{tx} \left( U_1 \hat{U} + \frac{1}{2} \epsilon \hat{U}^2  \right)_x \right] dx.
$$

Because $x$ and $t$ derivatives of $U_0$ are $\epsilon$-independent,
for sufficiently small $\epsilon$, there is an $\epsilon$-independent positive constant $C$
such that
$$
\| \hat{U}_t \|^2_{L^2_{\rm per}} + \| \hat{U}_x \|^2_{L^2_{\rm per}} + \epsilon \| \hat{U}_{xt} \|^2_{L^2_{\rm per}} \leq C \hat{E}.
$$
By Poincar\'e's inequality for $(2L)$-periodic mean-zero functions, there is
another positive constant $C$ such that
$$
\| \hat{U} \|^2_{L^2_{\rm per}} \leq C \| \hat{U}_x \|^2_{L^2_{\rm per}} \leq C \hat{E}.
$$
By Cauchy--Schwarz's inequality, we obtain from the energy balance equation that
\begin{equation}
\label{energy-inequality}
\frac{1}{2} \frac{d \hat{E}}{dt} \leq \| \hat{H} \|_{L^2_{\rm per}}  \| \hat{U}_t \|_{L^2_{\rm per}}
+ C \epsilon \left( \| U_0 \|_{L^{\infty}_{\rm per}} + \epsilon^{1/2} \| U_1 \|_{L^{\infty}_{\rm per}}
+ \epsilon^{3/2} \| \hat{U} \|_{L^{\infty}_{\rm per}} \right) \hat{E},
\end{equation}
where the positive constant $C$ is $\epsilon$-independent. Note that
the terms $\| U_{0t} \|_{L^{\infty}_{\rm per}}$, $\| U_{0x} \|_{L^{\infty}_{\rm per}}$,
$\| U_{0xt} \|_{L^{\infty}_{\rm per}}$, and $\| U_{1x} \|_{L^{\infty}_{\rm per}}$
are not listed in the inequality (\ref{energy-inequality}) because they are $\epsilon$-independent
and bounded if $f^{\pm} \in C(\mathbb{R},H^s_{\rm per}(-L,L))$ with $s \geq 10$.

Setting $\hat{E} := \hat{Q}^2$ and using Sobolev embedding's
$\| \hat{U} \|_{L^{\infty}_{\rm per}} \leq C_{\rm emb} \hat{Q}$,
we rewrite a priori energy estimate (\ref{energy-inequality}) in the form
\begin{equation}
\label{energy-estimate}
\frac{d \hat{Q}}{dt} \leq \| \hat{H} \|_{L^2_{\rm per}} +
C \epsilon \left( \| U_0 \|_{L^{\infty}_{\rm per}} + \epsilon^{1/2} \| U_1 \|_{L^{\infty}_{\rm per}}
+ \epsilon^{3/2} \hat{Q} \right)   \hat{Q},
\end{equation}
for another positive $\epsilon$-independent constant $C$. By Gronwall's inequality,
we integrate the a priori energy estimate (\ref{energy-estimate}) to obtain
\begin{equation}
\label{bound-tech-1}
\hat{Q}(t) \leq \left( \hat{Q}(0) + t_0 \sup_{t \in [0,t_0]} \| \hat{H} \|_{L^2_{\rm per}} \right)
e^{C \epsilon t}, \quad t \in [0,t_0],
\end{equation}
for any $t_0 > 0$, sufficiently small $\epsilon$, and some $(t_0,\epsilon)$-independent positive
constant $C_0$. Since $\hat{Q}(0) = \| \partial_T U_1 \|_{L^2_{\rm per}}$,
bound (\ref{bound-tech-1}) yields the result (\ref{bound-1}) after returning
to the original variables (\ref{decomposition-B}). By Proposition \ref{proposition-continuation},
the solution is continued from $t_0 = \mathcal{O}(1)$ to $t_0 = \mathcal{O}(\epsilon^{-1})$
thanks to $\hat{Q}(t) < \infty$ for all $t \in [0,t_0]$ and
all sufficiently small $\epsilon > 0$, as well as to Sobolev's embedding
$$
\| \epsilon^2 \hat{U} \|_{L^{\infty}_{\rm per}} \leq C_{\rm emb} \epsilon^2 \hat{Q}, \quad
\| \epsilon^2 \hat{U}_t \|_{L^{\infty}_{\rm per}} \leq C_{\rm emb} \epsilon^{3/2} \hat{Q}.
$$

Bound (\ref{bound-2}) is proved similarly after writing $\hat{U} = U_2 + \epsilon \check{U}$,
where $U_2$ is a bounded solution of system (\ref{second-order}), whereas $\phi^{\pm}$ in
(\ref{phi-pm-Fourier}) satisfy the linearized KdV equations
(\ref{KdV-lin}) subject to the initial data (\ref{ic-kdv-lin}). The new error
term $\check{U}$ satisfy a priori energy estimate similar to (\ref{energy-estimate})
with the new error term. This energy estimate yields bound (\ref{bound-2})
 after returning to the original variables (\ref{decomposition-B})
 thanks to the triangle inequality and the bound
$\| U_2 \|_{H^1_{\rm per}} \leq C$, where the positive constant $C$ is independent of
$t_0 = \mathcal{O}(\epsilon^{-1})$ and $\epsilon$.
\end{proof1}

\section{Long-wave approximation for $\gamma > 0$}

We shall consider here the case of the regularized Boussinesq--Ostrovsky equation
(\ref{BO}) with $\gamma > 0$. We consider again the initial-value problem
starting with the initial data (\ref{initial-values}) and (\ref{FS-IO})
satisfying the zero-mean velocity constraint (\ref{velocity-constraint}).
By the exact solution (\ref{mean-value}) the mean value of the solution is oscillating in $t$
with $\langle U \rangle = F_0 \cos(\sqrt{\epsilon \gamma} t)$.

Substituting $U(x,t) = c_0 \cos(\omega t) + \tilde{U}(x,t)$ into the evolution equation
(\ref{BO}), where $c_0 := F_0$, $\omega := \sqrt{\epsilon \gamma}$, and
$\tilde{U}$ is the zero-mean part of the $2L$-periodic function $U$,
we obtain the evolution equation
\begin{equation}
\label{BO-periodic-O}
\tilde{U}_{tt} - \tilde{U}_{xx} = \epsilon \left( c_0 \cos(\omega t) \tilde{U}_{xx} +
\frac{1}{2} (\tilde{U}^2)_{xx} + \tilde{U}_{tt xx} - \gamma \tilde{U} \right).
\end{equation}
By Proposition \ref{proposition-local}, there exists a unique local solution
$\tilde{U} \in C^1([0,t_0],H^s_{\rm per}(-L,L))$ of the evolution equation (\ref{BO-periodic-O})
for any $(\tilde{F},\tilde{V}) \in H^s_{\rm per}(-L,L) \times H^{s}_{\rm per}(-L,L)$
with $s > \frac{1}{2}$, where $t_0 > 0$ is a local existence time.

\subsection{Derivation}

We shall repeat steps of the formal asymptotic theory, which relies on the
decomposition (\ref{asymptotic-expansion}) and the leading-order approximation
(\ref{leading-order}), with the initial conditions (\ref{ic-kdv}). In what follows,
we first work implicitly with $\epsilon$-dependent $\omega$ and then estimate
the size of the correction terms by using the explicit
dependence $\omega = \sqrt{\epsilon \gamma}$.

\vspace{0.2cm}

{\bf Order $\mathcal{O}(\epsilon)$:} The first-order correction term
satisfies the initial-value problem:
\begin{equation}
\label{first-order-O}
 \left\{ \begin{array}{l} (\partial_t^2 - \partial_x^2) U_1 =
-2 \partial^2_{t T} U_0 + c_0 \cos(\omega t) \partial^2_{x} U_0
+ \frac{1}{2} \partial_x^2(U_0^2) + \partial^4_{ttxx} U_0 - \gamma U_0, \\
U_1 |_{t = 0} = 0, \\
\partial_t U_1 |_{t = 0} = - \partial_T U_0 |_{t = 0}. \end{array} \right.
\end{equation}
Using the Fourier series $U_1(x,t) = \sum_{n \in \mathbb{Z} \backslash \{0\}}
g_n(t) e^{\frac{i \pi n x}{L}}$, we obtain the uncoupled system of differential equations
\begin{equation}
\label{first-order-ode-BO}
\frac{d^2 g_n}{d t^2} + \left( \frac{\pi n}{L} \right)^2 g_n = h_n(t),
\end{equation}
where
\begin{eqnarray*}
h_n(t) & = & -\frac{2 \pi i n}{L} \left( \frac{d a^+_n}{d T} e^{\frac{i \pi n t}{L}} - \frac{d a^-_n}{d T}
e^{-\frac{i \pi n t}{L}} \right) + \frac{\pi^4 n^4}{L^4} \left( a^+_n e^{\frac{i \pi n t}{L}} + a^-_n
e^{-\frac{i \pi n t}{L}} \right) \\
& \phantom{t} & - \frac{\pi^2 n^2}{L^2} c_0 \cos(\omega t) \left( a^+_n e^{\frac{i \pi n t}{L}} + a^-_n
e^{-\frac{i \pi n t}{L}} \right) - \gamma  \left( a^+_n e^{\frac{i \pi n t}{L}} + a^-_n e^{-\frac{i \pi n t}{L}} \right)\\
& \phantom{t} & -\frac{\pi^2 n^2}{2 L^2}
\left( \sum_{k \in \mathbb{Z} \backslash \{0,n\}} a^+_k a^+_{n-k} \right) e^{\frac{i \pi n t}{L}}
-\frac{\pi^2 n^2}{2 L^2} \left( \sum_{k \in \mathbb{Z}\backslash \{0,n\}} a^-_k a^-_{n-k} \right) e^{-\frac{i \pi n t}{L}}  \\
& \phantom{t} & -\frac{\pi^2 n^2}{L^2} \left( \sum_{k \in \mathbb{Z}\backslash \{0,n\}} a^+_k a^-_{n-k} e^{\frac{i \pi (2k-n) t}{L}} \right).
\end{eqnarray*}

The resonant terms at $e^{\pm \frac{i \pi n t}{L}}$ are removed if the Fourier coefficients
satisfy the evolution equations:
\begin{eqnarray}
\label{constraints-BO}
\mp \frac{2 \pi i n}{L} \frac{d a^{\pm}_n}{d T}
+ \frac{\pi^4 n^4}{L^4} a^{\pm}_n - \gamma  a^{\pm}_n
- \frac{\pi^2 n^2}{2 L^2}
\sum_{k \in \mathbb{Z} \backslash \{0,n\}} a^{\pm}_k a^{\pm}_{n-k} = 0,
\end{eqnarray}
which are equivalent to the two uncoupled Ostrovsky equations
\begin{equation}
\label{KdV-Ost}
\frac{\partial}{\partial \xi_{\pm}} \left( \mp 2 \frac{\partial f^{\pm}}{\partial T}
+ \frac{\partial^3 f^{\pm}}{\partial \xi_{\pm}^3}
+ f^{\pm} \frac{\partial f^{\pm}}{\partial \xi_{\pm}} \right) = \gamma f^{\pm}.
\end{equation}
We consider the initial-value problem for the uncoupled Ostrovsky
equations (\ref{KdV-Ost}) starting with the initial values $f^{\pm} |_{T = 0}$
given by (\ref{leading-order-before-ic}).
By the local and global well-posedness theory for the Ostrovsky equation
\cite{GL07,LM06,Ts,VL04}, a unique global solution
$f^{\pm} \in C(\mathbb{R}_+,H^s_{\rm per}(-L,L))$ exists
for any $f^{\pm} |_{T = 0} \in H^s_{\rm per}(-L,L)$ with $s > \frac{3}{4}$.

\begin{remark}
If $\gamma \neq 0$, solutions of the Ostrovsky equations
(\ref{KdV-Ost}) must satisfy the zero-mean constraints \cite{Benilov}. In our derivation,
the zero-mean constraints are satisfied automatically because $U_0$ and $f^{\pm}$
are the zero-mean parts of the $(2L)$-periodic functions. Note that the oscillating term
$\langle U \rangle = c_0 \cos(\omega t)$ does not contribute
to the Ostrovsky equations (\ref{KdV-Ost}). Consequently, as $\gamma \to 0$, the
limiting KdV equation (\ref{KdV-Ost}) is different from the KdV equation (\ref{KdV})
if $c_0 \neq 0$.
\end{remark}

After the constraints (\ref{constraints-BO}) are substituted back into the differential equations
(\ref{first-order-ode-BO}), we obtain the linear inhomogeneous equations
\begin{eqnarray*}
\frac{d^2 g_n}{d t^2} + \left( \frac{\pi n}{L} \right)^2 g_n & = &
- \frac{\pi^2 n^2}{L^2} c_0 \cos(\omega t) \left( a^+_n e^{\frac{i \pi n t}{L}} + a^-_n
e^{-\frac{i \pi n t}{L}} \right) \\
& \phantom{t} & \phantom{texttexttext}
-\frac{\pi^2 n^2}{L^2} \sum_{k \in \mathbb{Z}\backslash\{0,n\}} a^+_k a^-_{n-k} e^{\frac{i \pi (2k-n) t}{L}},
\end{eqnarray*}
subject to the initial conditions
\begin{equation*}
g_n(0) = 0, \quad \partial_t g_n(0) = - \partial_T a_n^+(0) - \partial_T a_n^-(0).
\end{equation*}
This initial-value problem admits the following bounded solution:
\begin{eqnarray*}
g_n(t) & = & \frac{c_0 \pi^2 n^2}{4 \pi^2 n^2 - \omega^2 L^2}
\left[\frac{2\pi n i}{L} \frac{\sin(\omega t)}{\omega}
( a_n^+ e^{\frac{i \pi n t}{L}} - a_n^- e^{\frac{-i \pi n t}{L}})
- \cos(\omega t) ( a_n^+ e^{\frac{i \pi n t}{L}} + a_n^- e^{\frac{-i \pi n t}{L}}) \right] \\
& \phantom{t} & + \sum_{k \in \mathbb{Z} \backslash \{0,n\}} \frac{n^2}{4 k (k-n)} a_k^+ a_{n-k}^-
e^{\frac{i \pi (2k-n) t}{L}} + G_n \cos\left(\frac{\pi n t}{L}\right) + H_n  \sin\left(\frac{\pi n t}{L}\right),
\end{eqnarray*}
where $G_n$ and $H_n$ are constants of integrations to be found from the initial conditions for $g_n$.

\begin{remark}
Since $\omega = \sqrt{\epsilon \gamma}$, the first term in the explicit solution
for $g_n$ grows in $t$ in the limit $\epsilon \to 0$ if $c_0 \neq 0$ but it is
nevertheless bounded by the $\mathcal{O}(\epsilon^{-1/2})$ constant for any $\epsilon > 0$. This
fact implies that
$$
\| \epsilon U_1 \|_{H^1_{\rm per}} = \mathcal{O}(c_0 \epsilon^{1/2}) \quad \mbox{\rm as} \quad \epsilon \to 0.
$$
If the zero-mean velocity constraint (\ref{velocity-constraint}) is violated and $V_0 \neq 0$,
then the asymptotic procedure will give $\| \epsilon U_1 \|_{H^1_{\rm per}} = \mathcal{O}(1)$ as $\epsilon \to 0$
and the first-order correction term (as well as all higher-order correction terms) become comparable with
the leading-order approximation. This will clearly prevent us from justification of the long-wave approximation.
This remark explains why we have set $V_0 = 0$ for $\gamma > 0$ in the constraint (\ref{velocity-constraint}).
\end{remark}

Using the explicit solution for $g_n$, we rewrite
the first-order correction term in the implicit form:
\begin{equation}
\label{first-order-correction-BO}
U_1(x,t) = f_c(x,t) + f_m(x,t) + \phi^-(\xi_-,T) + \phi^+(\xi_+,T),
\end{equation}
where $f_c$ is given by (\ref{f-c-Fourier}) and (\ref{expression}), $f_m$ is uniquely defined by
\begin{eqnarray}
\nonumber f_m(x,t) & = & \frac{\sin(\omega t)}{\omega}
\sum_{n \in \mathbb{Z} \backslash \{0\}} \frac{c_0 \pi^2 n^2}{4 \pi^2 n^2 - \omega^2 L^2}
\frac{2\pi n i}{L}
( a_n^+(T) e^{\frac{i \pi n t}{L} + \frac{i \pi n x}{L}} - a_n^-(T) e^{\frac{-i \pi n t}{L} + \frac{i \pi n x}{L}}) \\
& \phantom{t} &
- \cos(\omega t) \sum_{n \in \mathbb{Z} \backslash \{0\}} \frac{c_0 \pi^2 n^2}{4 \pi^2 n^2 - \omega^2 L^2}
( a_n^+(T) e^{\frac{i \pi n t}{L}+ \frac{i \pi n x}{L}} + a_n^-(T) e^{\frac{-i \pi n t}{L}+ \frac{i \pi n x}{L}}).
\label{expression-f-c} 
\end{eqnarray}
and functions $\phi^{\pm}$ are given by
\begin{equation}
\label{first-order-correction-BO-phi}
\phi^{\pm}(\xi_{\pm},T) = \sum_{n \in \mathbb{Z}} b^{\pm}_n(T) e^{\frac{i \pi n \xi_{\pm}}{L}},
\end{equation}
subject to the initial conditions
\begin{eqnarray}
\nonumber
b^{\pm}_n |_{T = 0} & = & \frac{c_0 \pi^2 n^2}{4 \pi^2 n^2 - \omega^2 L^2} a_n^{\mp} |_{T = 0}
- \sum_{k \in \mathbb{Z} \backslash \{0,n\}}
\frac{n (n \pm (2k-n))}{8 k (k-n)} \left( a_k^+ a_{n-k}^- \right)|_{T = 0} \\
& \phantom{t} &
\mp \frac{L}{2 i \pi n} \left( \frac{d a_n^+}{d T} + \frac{d a_n^-}{d T}\right) \biggr|_{T = 0}.
\label{ic-kdv-lin-BO}
\end{eqnarray}


\vspace{0.2cm}

{\bf Order $\mathcal{O}(\epsilon^2)$:} The second-order correction term
satisfies the initial-value problem:
\begin{equation}
\label{second-order-BO}
\left\{ \begin{array}{l} (\partial_t^2 - \partial_x^2) U_2 =
-2 \partial^2_{t T} U_1 - \partial_T^2 U_0 + c_0 \cos(\omega t) \partial_x^2 U_1 + \partial_x^2(U_0 U_1) \\
\phantom{texttexttexttexttexttext} +
\partial^4_{ttxx} U_1 + 2\partial^4_{tTxx} U_0  - \gamma U_1, \\
U_2 |_{t = 0} = 0, \\
\partial_t U_2 |_{t = 0} = - \partial_T U_1 |_{t = 0}. \end{array} \right.
\end{equation}

Repeating the procedure of removing the resonant terms $e^{\pm \frac{i \pi n t}{L}}$
and using again the fact that $\cos(\omega t)$ and $\sin(\omega t)$
do not produce the resonant terms, we define uniquely the evolution problem for the Fourier coefficients
of the functions $\phi^{\pm}$ in  (\ref{first-order-correction-BO-phi}):
\begin{eqnarray}
\label{constraints-second-order-BO}
- \frac{d^2 a_n^{\pm}}{d T^2} \mp \frac{2 \pi i n}{L} \frac{d b^{\pm}_n}{d T}
+ \frac{\pi^4 n^4}{L^4} b^{\pm}_n \mp \frac{2i \pi^3 n^3}{L^3} \frac{d a_n^{\pm}}{d T} \\
- \frac{\pi^2 n^2}{2 L^2} \sum_{k \in \mathbb{Z} \backslash \{0,n\}} a^{\pm}_k b^{\pm}_{n-k}
- \frac{\pi^2 n^2}{L^2} \sum_{k \in \mathbb{Z} \backslash \{0,n\}} \frac{(n-k)^2}{4 n k} a^{\pm}_n |a^{\mp}_{k}|^2
- \gamma  b^{\pm}_n = 0.
\nonumber
\end{eqnarray}
These equations are equivalent to the linearized Ostrovsky equations
\begin{equation}
\label{KdV-Ost-lin}
\frac{\partial}{\partial \xi_{\pm}} \left( \mp 2 \frac{\partial \phi^{\pm}}{\partial T}
+ \frac{\partial^3 \phi^{\pm}}{\partial \xi_{\pm}^3}
+ \frac{\partial}{\partial \xi_{\pm}} f^{\pm} \phi^{\pm} \right) = \gamma \phi^{\pm}
+ \frac{\partial^2 f^{\pm}}{\partial T^2} \mp 2 \frac{\partial^4 f^{\pm}}{\partial \xi_{\pm}^3 T}
+ \frac{\partial^2 f_s^{\pm}}{\partial \xi_{\pm}^2},
\end{equation}
with the same definition for $f_s^{\pm}$ as in (\ref{expressions-F-pm}). The closed form expressions for the initial conditions in terms of the leading order solutions $f^{\pm}$
can be found in \cite{Karima1,Karima2}.

After the constraints (\ref{constraints-second-order-BO}) are substituted back to the initial-value problem
(\ref{second-order-BO}), we can obtain a bounded solution for $U_2(x,t)$. Note that
the bounded solutions satisfies
$$
\| \epsilon^2 U_2 \|_{H^1_{\rm per}} = \mathcal{O}(c_0 \epsilon) \quad \mbox{\rm as} \quad \epsilon \to 0
$$
because of the oscillatory behavior of the functions $\cos(\omega t)$ and $\sin(\omega t)$
with $\omega = \sqrt{\epsilon \gamma}$.
This completes the construction of the formal asymptotic
expansion (\ref{asymptotic-expansion}) up to and including
the $\mathcal{O}(\epsilon^2)$ terms.

\subsection{Justification}

The following theorem gives the main result on the justification of the long-wave approximation.

\begin{theorem}
Assume that $(F,V) \in H^1_{\rm per}(-L,L) \times H^1_{\rm per}(-L,L)$
subject to the zero-mean constraint (\ref{velocity-constraint}) on $V$.
Fix $s \geq 10$ and let $f^{\pm} \in C(\mathbb{R},H^s_{\rm per}(-L,L))$ be global solutions of the Ostrovsky equations
(\ref{KdV-Ost}) starting with the initial conditions (\ref{leading-order-before-ic}).
Let $U_0$ and $U_1$ be given by (\ref{leading-order}) and (\ref{first-order-correction-BO})
with (\ref{expression}), (\ref{expression-f-c}), (\ref{first-order-correction-BO-phi}),
and (\ref{ic-kdv-lin-BO}). There is $\epsilon_0 > 0$
such that for all $\epsilon \in (0,\epsilon_0)$ and all $\epsilon$-independent
$T_0 > 0$, there is $\epsilon$-independent constant $C > 0$
such that for all $t_0 \in [0,T_0/\epsilon]$, the local solution of
the regularized Boussinesq--Ostrovsky equation (\ref{BO}) satisfies
\begin{equation}
\label{bound-1-BO}
\sup_{t \in [0,t_0]} \| U - c_0 \cos(\omega t) - U_0 - \epsilon U_1 \|_{H^1_{\rm per}} \leq C \epsilon t_0 (c_0 + \epsilon).
\end{equation}
If, in addition, $\phi^{\pm}$ in (\ref{first-order-correction-BO-phi}),  satisfies the linearized Ostrovsky equations
(\ref{KdV-Ost-lin}) subject to the initial conditions (\ref{ic-kdv-lin-BO}) and $s$ is sufficiently large,
then for all $\epsilon \in (0,\epsilon_0)$ and all $\epsilon$-independent $T_0 > 0$, there is an $\epsilon$-independent
constant $C > 0$ such that
\begin{equation}
\label{bound-2-BO}
\sup_{t \in [0,T_0/\epsilon]} \| U - c_0 \cos(\omega t)  - U_0 - \epsilon U_1 \|_{H^1_{\rm per}} \leq
C \epsilon (c_0 + \epsilon).
\end{equation}
\label{theorem-main-2}
\end{theorem}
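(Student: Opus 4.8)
The plan is to repeat, almost verbatim, the energy-method argument used to prove Theorem~\ref{theorem-main}, with the oscillatory mean value absorbed into the background. I would write
\begin{equation*}
U(x,t) = c_0 \cos(\omega t) + U_0(x,t) + \epsilon U_1(x,t) + \epsilon^2 \hat{U}(x,t), \quad \omega = \sqrt{\epsilon\gamma},
\end{equation*}
with $U_0$ and $U_1$ given by (\ref{leading-order}) and (\ref{first-order-correction-BO}), where $f^{\pm}$ solve the uncoupled Ostrovsky equations (\ref{KdV-Ost}). Substituting into (\ref{BO}), and using that the terms at orders $\epsilon^0,\epsilon^1,\epsilon^2$ cancel by construction of (\ref{KdV-Ost}), (\ref{constraints-BO}) and the explicit $g_n$, I would obtain an evolution problem for the error of the form
\begin{equation*}
\hat{U}_{tt} - (1 + \epsilon c_0\cos(\omega t))\hat{U}_{xx} - \epsilon\hat{U}_{ttxx} + \epsilon\gamma\hat{U} = \epsilon\partial_x^2\left(U_0\hat{U} + \epsilon U_1\hat{U} + \tfrac{1}{2}\epsilon^2\hat{U}^2\right) + \hat{H},
\end{equation*}
with $\hat{U}|_{t=0}=0$, $\hat{U}_t|_{t=0}=-\partial_T U_1|_{T=0}$, and a source $\hat{H}$ collecting the $\mathcal{O}(\epsilon^2)$ residuals, now carrying the extra $-\gamma U_1$ contributions and the $\cos(\omega t)$ modulation of the transport coefficient. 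As in Theorem~\ref{theorem-main}, the regularity count shows $\hat{H}\in C([0,t_0],H^1_{\rm per}(-L,L))$ for $s\geq 10$, so that an extension of Proposition~\ref{proposition-local} gives a local solution $\hat{U}\in C^1([0,t_0],H^1_{\rm per}(-L,L))$.

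Next I would run the same a priori estimate with the modified energy
\begin{equation*}
\hat{E} = \int_{-L}^L \left( \hat{U}_t^2 + (1 + \epsilon c_0\cos(\omega t))\hat{U}_x^2 + \epsilon\gamma\hat{U}^2 + \epsilon\hat{U}_{tx}^2 + \epsilon U_0\hat{U}_x^2 + 2\epsilon U_{0x}\hat{U}\hat{U}_x \right) dx,
\end{equation*}
whose extra nonnegative term $\epsilon\gamma\hat{U}^2$ controls the new $\epsilon\gamma\hat{U}\hat{U}_t$ contribution, while $\partial_t\cos(\omega t)=-\omega\sin(\omega t)$ generates only a harmless term of size $\epsilon^{3/2}$ in $\tfrac{d\hat E}{dt}$. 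Multiplying the error equation by $\hat{U}_t$, integrating over $[-L,L]$, and using Poincar\'e's inequality (valid since $\hat{U}$ is mean-zero), Cauchy--Schwarz and the embedding $H^1_{\rm per}\hookrightarrow L^\infty_{\rm per}$, I would reach an estimate of the form $\tfrac{d\hat{Q}}{dt}\leq \|\hat{H}\|_{L^2_{\rm per}} + C\epsilon(\ldots)\hat{Q}$ with $\hat{Q}^2=\hat{E}$, and then integrate via Gronwall exactly as in (\ref{bound-tech-1}).

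The main obstacle is the precise $\epsilon$-bookkeeping of $\hat{H}$ and of the initial datum, which is the source of the factor $(c_0+\epsilon)$ in (\ref{bound-1-BO}). The genuinely new object is the mean-value correction $f_m$ in (\ref{expression-f-c}): because it carries the prefactor $\sin(\omega t)/\omega$ with $\omega=\sqrt{\epsilon\gamma}$, one has $\|f_m\|_{H^1_{\rm per}}=\mathcal{O}(c_0\epsilon^{-1/2})$, so a priori $\hat{H}$ could inherit these inverse powers of $\omega$ after the two $t$- and $T$-derivatives and the multiplication by $c_0\cos(\omega t)$. The key point I would establish is that the secular/resonant pieces of $\hat{H}$ are exactly those removed by the constraints (\ref{constraints-second-order-BO}), and that the surviving residual is tamed by two mechanisms: each $\partial_t$ falling on $\cos(\omega t)$ or $\sin(\omega t)$ returns a compensating factor $\omega$ that kills the offending $\omega^{-1}$, and the resonant denominators $4\pi^2n^2-\omega^2L^2$ stay bounded away from zero as $\epsilon\to 0$. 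Tracking these through, $t_0\sup_{[0,t_0]}\|\hat{H}\|_{L^2_{\rm per}}$ together with the $\epsilon^2$ prefactor yields $\|\epsilon^2\hat{U}\|_{H^1_{\rm per}}\leq C\epsilon t_0(c_0+\epsilon)$, which is (\ref{bound-1-BO}); Proposition~\ref{proposition-continuation} then continues the solution across $[0,T_0/\epsilon]$ since $\epsilon^2\hat{U}$ and $\epsilon^2\hat{U}_t$ stay small in $L^\infty_{\rm per}$.

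Finally, for the sharpened bound (\ref{bound-2-BO}) I would split $\hat{U}=U_2+\epsilon\check{U}$, where $U_2$ is the bounded second-order solution obtained once $\phi^{\pm}$ in (\ref{first-order-correction-BO-phi}) are required to solve the linearized Ostrovsky equations (\ref{KdV-Ost-lin}) with data (\ref{ic-kdv-lin-BO}). This removes the $t$-linearly growing part of $\hat{H}$, so that $\check{U}$ obeys an energy estimate of the same type but with an $\mathcal{O}(\epsilon)$-smaller source, giving $\sup_{[0,T_0/\epsilon]}\|\epsilon^3\check{U}\|_{H^1_{\rm per}}\leq C\epsilon^2$. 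Combining this through the triangle inequality with the direct bound $\|\epsilon^2 U_2\|_{H^1_{\rm per}}=\mathcal{O}(c_0\epsilon)$ produces (\ref{bound-2-BO}). The delicate step here is the uniform-in-time control of $\|U_2\|_{H^1_{\rm per}}$ on the Ostrovsky time scale $t_0=\mathcal{O}(\epsilon^{-1})$, for which I would use the oscillatory cancellations carried by $\cos(\omega t),\sin(\omega t)$ together with the boundedness of $f^{\pm},\phi^{\pm}$ in $H^s_{\rm per}$ for $s$ sufficiently large.
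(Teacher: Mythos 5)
Your proposal is correct and follows essentially the same route as the paper, which proves Theorem \ref{theorem-main-2} by repeating the energy argument of Theorem \ref{theorem-main} with the decomposition $U = c_0\cos(\omega t) + U_0 + \epsilon U_1 + \epsilon^2 \hat{U}$ and the energy modified to include the $\epsilon\gamma \hat{U}^2$ term, followed by the same splitting $\hat{U} = U_2 + \epsilon\check{U}$ for the improved bound. In fact you supply more detail than the paper does on the $\epsilon$-bookkeeping of $f_m$, the $\sin(\omega t)/\omega$ factors, and the origin of the $(c_0+\epsilon)$ factor, all of which is consistent with the paper's remarks on the sizes of $\epsilon U_1$ and $\epsilon^2 U_2$.
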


\begin{remark}
The bounds (\ref{bound-1-BO}) and (\ref{bound-2-BO}) of Theorem \ref{theorem-main-2} are larger
then the bounds (\ref{bound-1}) and (\ref{bound-2}) of Theorem \ref{theorem-main} if $c_0 \neq 0$ but they still
complete justification of the long-wave approximation up to
the first-order correction term because
$\| \epsilon U_1 \|_{H^1_{\rm per}} = \mathcal{O}(c_0 \epsilon^{1/2})$ as $\epsilon \to 0$.
\end{remark}

\begin{remark}
If $c_0 = 0$, then bounds (\ref{bound-1-BO}) and (\ref{bound-2-BO}) become
bounds (\ref{bound-1}) and (\ref{bound-2}) because the periodic driving terms
$\cos(\omega t)$ and $\sin(\omega t)$ are not present in all the expansions.
\end{remark}

The proof of this justification theorem is similar to the proof of Theorem \ref{theorem-main}
after the energy of the error term for the formal asymptotic expansion as in
(\ref{decomposition-B}) and (\ref{energy-norm}) is modified to include
the $\epsilon \gamma$-term as in the energy function (\ref{energy-function}).

\section{Numerical illustrations}

\subsection{Boussinesq equation}

Let us consider the initial-value problem for the regularized Boussinesq equation (\ref{B})
starting with initial data
\begin{eqnarray}
\left\{ \begin{array}{l} U|_{t=0} = 3k^2 \ {\rm sech}^2 \left(\frac{kx}{2}\right), \\
U_t|_{t=0} = 3k^3 \ {\rm sech}^2 \left(\frac{kx}{2}\right)  \ {\rm tanh} \left(\frac{kx}{2}\right),
\end{array} \right.
\label{BO_f+0}
\end{eqnarray}
where $k > 0$ is an arbitrary parameter. The initial data is defined on the periodic domain 
$-L \leq x \leq L$ and the mean value is given by
$$
c_0 =\frac{1}{2L} \int^{L}_{-L} U|_{t=0} dx = \frac{6k}{L}\ {\rm tanh} \left(\frac{kL}{2}\right).
$$
When $L \to \infty$, $c_0 \to 0$, and the initial data (\ref{BO_f+0}) corresponds at
the leading order to a solitary wave of the KdV equation (\ref{KdV}) propagating to the right:
\begin{eqnarray}
f^+(\xi_+,T) = 0, \quad
f^-(\xi_-,T) = 3 k^2 {\rm sech}^2(z), \quad
z = \frac{k}{2} \left(\xi_{-} - \frac{k^2}{2} T \right).
\end{eqnarray}

Combining (\ref{leading-order}) and (\ref{first-order-correction}), we obtain
the weakly nonlinear solution in the form
\begin{equation}
\label{weakly-nonlinear-solution}
U = f^-(\xi_-,T)  + \epsilon \left[ \phi^-(\xi_-,T) + \phi^+(\xi_+,T) \right]
+ \mathcal{O}(\varepsilon^2) \ ,
\end{equation}
where the correction terms $\phi^{\pm}$ satisfy the linearized KdV equations
\begin{eqnarray}
-2 \partial_T \phi^+ + \partial_{\xi_+}^3 \phi^+ = 0
\label{phi_plus}
\end{eqnarray}
and
\begin{eqnarray}
\partial_{\xi^-} \left( 2 \partial_T \phi^- +
\partial_{\xi_-}^3 \phi^- + \partial_{\xi_-} (f^- \phi^-) \right)
= \partial_T^2 f^- + 2 \partial_T \partial_{\xi_-}^3 f^-,
\label{phi_minus}
\end{eqnarray}
subject to the initial data
\begin{equation}
\label{phi-initial-data}
\phi^+|_{T=0} = \frac{3k^4}{4} {\rm sech}^2\left(\frac{k\xi_+}{2}\right), \quad
\phi^-|_{T=0} = -\frac{3k^4}{4} {\rm sech}^2\left( \frac{k\xi_-}{2}\right).
\end{equation}
In what follows, we consider simplification of all expressions in the case of solitary waves with
sufficiently large $L$.

Because $\partial_T f^- = -\frac{k^2}{2} \partial_{\xi_-} f^-$,
we can integrate the linearized KdV equation (\ref{phi_minus})
in $\xi_-$, subject to the zero boundary conditions, and obtain:
$$
2 \partial_T \phi^- + \partial_{\xi_-}^3 \phi^- + \partial_{\xi_-} (f^- \phi^-) =
\partial_{\xi^-} \left( \frac{k^4}{4} f^- - k^2 \partial_{\xi_-}^2 f^- \right).
$$
Now, we recall that $f^-$ solves the stationary KdV equation:
$$
- k^2 f^- + \partial_{\xi_-}^2 f^- + \frac{1}{2} (f^-)^2 = 0,
$$
which implies that
$$
(k^2 - \partial_{\xi_-}^2 + f^-) \partial_{\xi_-} f^- = 0.
$$
Using the decomposition
$$
\phi^-(\xi_-,T) = a k^4 T \partial_{\xi_-} f^- + \psi^-(z), \quad z = \frac{k \xi_-}{2},
$$
where $a$ is a constant to be defined, we integrate the linearised equation
for $\psi^-(z)$, with zero boundary conditions, to obtain
$$
\left(4 - \partial_z^2 - 12 {\rm sech}^2(z) \right) \psi^- = 3 k^4
\left( (3 + 8a) {\rm sech}^2(z) - 6 {\rm sech}^6(z) \right).
$$
This equation is solved exactly with $\psi^- = 3 k^4 {\rm sech}^2(z)$ and
$a = -\frac{3}{8}$. Therefore, the initial-value problem for the linearized inhomogeneous KdV equation
(\ref{phi_minus}) is solved by the function
$$
\phi^-(\xi_-,T) = \tilde \phi^-(\xi_-,T)  -\frac{3}{8} k^4 T \frac{\partial f^-}{\partial \xi_-} + k^2 f^-,
$$
where $ \tilde \phi^-(\xi_-,T)$ is the solution to the Cauchy problem for the homogeneous linearized KdV equation:
\begin{equation}
2 \partial_T \tilde  \phi^- + \partial_{\xi_-}^3 \tilde \phi^- + \partial_{\xi_-} (f^- \tilde  \phi^-) = 0,
\label{phi_minus_hom}
\end{equation}
starting with the initial data
$$
\tilde \phi^-|_{T=0} = -\frac{15k^4}{4} {\rm sech}^2\left(\frac{k\xi_-}{2}\right).
$$
Note that the solutions of the linearised homogeneous KdV equations (\ref{phi_plus}) and (\ref{phi_minus_hom})
disperse to zero, so the effect of nonzero initial data decays in time.

We now compare direct numerical simulations of the Boussinesq equation (\ref{B})
with the weakly nonlinear solution (\ref{weakly-nonlinear-solution}).
We discretise the spatial domain into $N$ equally spaced points and solve
the Boussinesq equation in Matlab using a pseudo-spectral method,
based on the Fast Fourier Transform (FFT) \cite{Engelbrecht}.
The accuracy of the numerical method
is far in excess of what is required for comparisons with the weakly nonlinear
solution (\ref{weakly-nonlinear-solution}), nevertheless, extensions for further
more accurate requirements are trivially achieved by further decreasing the time
step and/or increasing the number of harmonics in the FFT. Similarly we find
the higher order terms $\phi^{\pm}$ numerically using pseudo-spectral methods
based on the FFT algorithm, analogous to the method used to solve the KdV equation in \cite{Ts} .

To consider the error of the weakly nonlinear solution, under the initial data
(\ref{BO_f+0}), we first introduce some notation. We define the numerical
solution of the Boussinesq equation (\ref{B}) as $U_{\rm num}$ and
the weakly nonlinear solutions as
\begin{eqnarray}
U_1 = f^-, \quad U_2 =  f^-   +    \epsilon \left( \phi_-^{\rm num}   +    \phi_+^{\rm num} \right),
\label{eqn:WNS_num}
\end{eqnarray}
where $\phi_{\pm}^{\rm num} $ are numerical solutions of the linearized KdV equations
(\ref{phi_plus}) and (\ref{phi_minus}). When $\phi_{\pm}$ are fixed at their initial condition
(\ref{phi-initial-data}), we denote this solution as $U_{12}$. Note that the latter solution was
previously studied in \cite{Karima1,Karima2}.

The top panels of Figure \ref{figure:WNS_BE_phi_num} depict
the evolution of the numerical solution $U_{\rm num}$
and the weakly nonlinear solutions $U_1$, $U_{12}$, and $U_2$ at times $t = 1$ (left) and $t = 1/\epsilon$ (right).
The middle panels of Figure \ref{figure:WNS_BE_phi_num} show the close-up of the area near 
the maximum of the right-propagating solitary wave at $t = 1$ (left) and of the small left-propagating wave 
at $t = 1/\epsilon$ (right). Note that the leading order approximation $U_1$ 
does not capture the generation of this left-propagating wave at all.
The bottom panels of Figure \ref{figure:WNS_BE_phi_num} also illustrate
the error terms for each of the weakly nonlinear solutions relative
to the numerical solution $U_{\rm num}$, for a particular $\epsilon$. For the time interval considered, it is clear
that there is a significant improvement in the error of the weakly nonlinear solutions 
$U_{12}$ and $U_2$ compared to the leading order approximation $U_1$.

\begin{figure}[htbp]
\begin{center}$
\begin{array}{ccc}
\quad t=1 & \quad t=1/\epsilon  \\
\includegraphics[width=2.8in
]{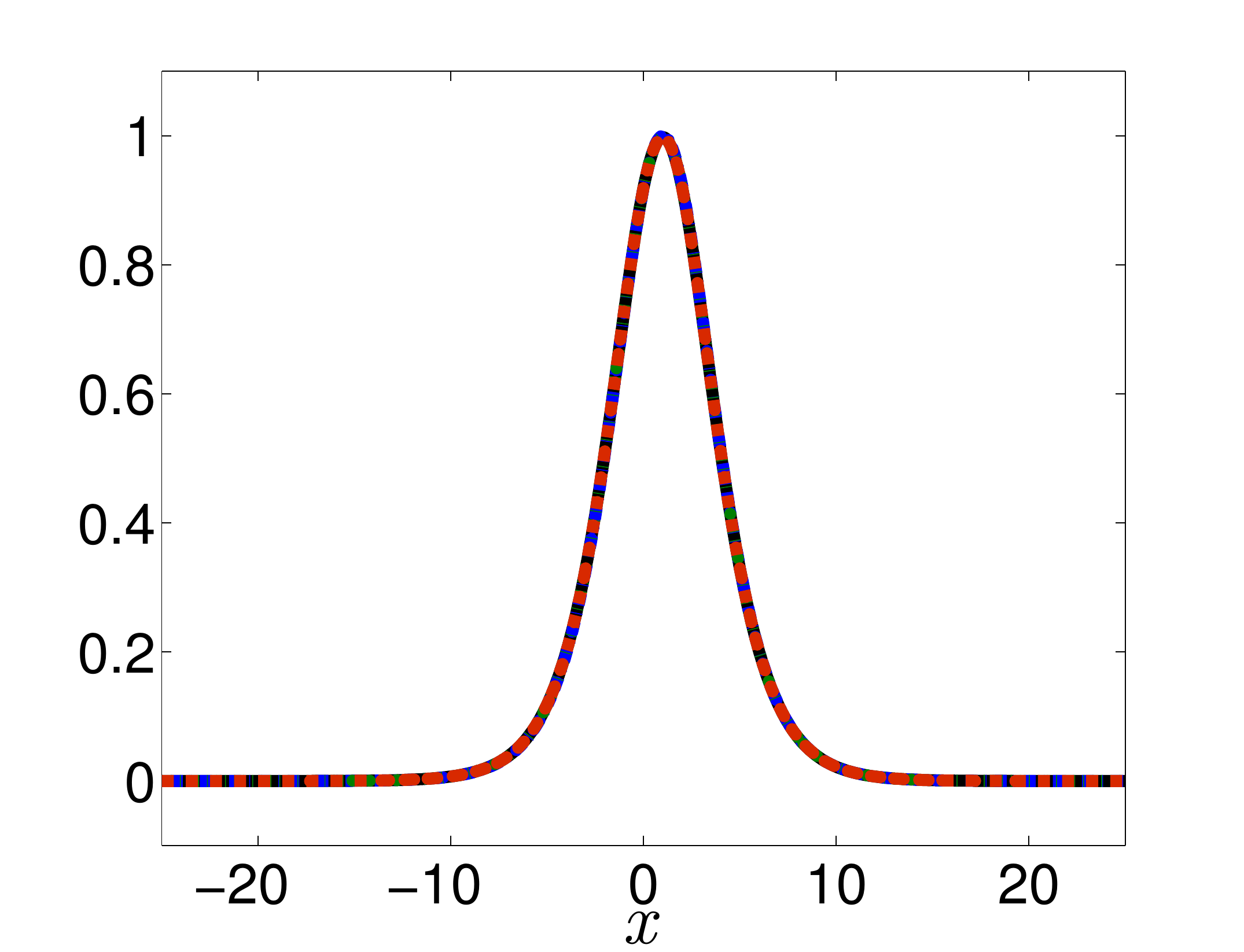} &
\includegraphics[width=2.8in
]{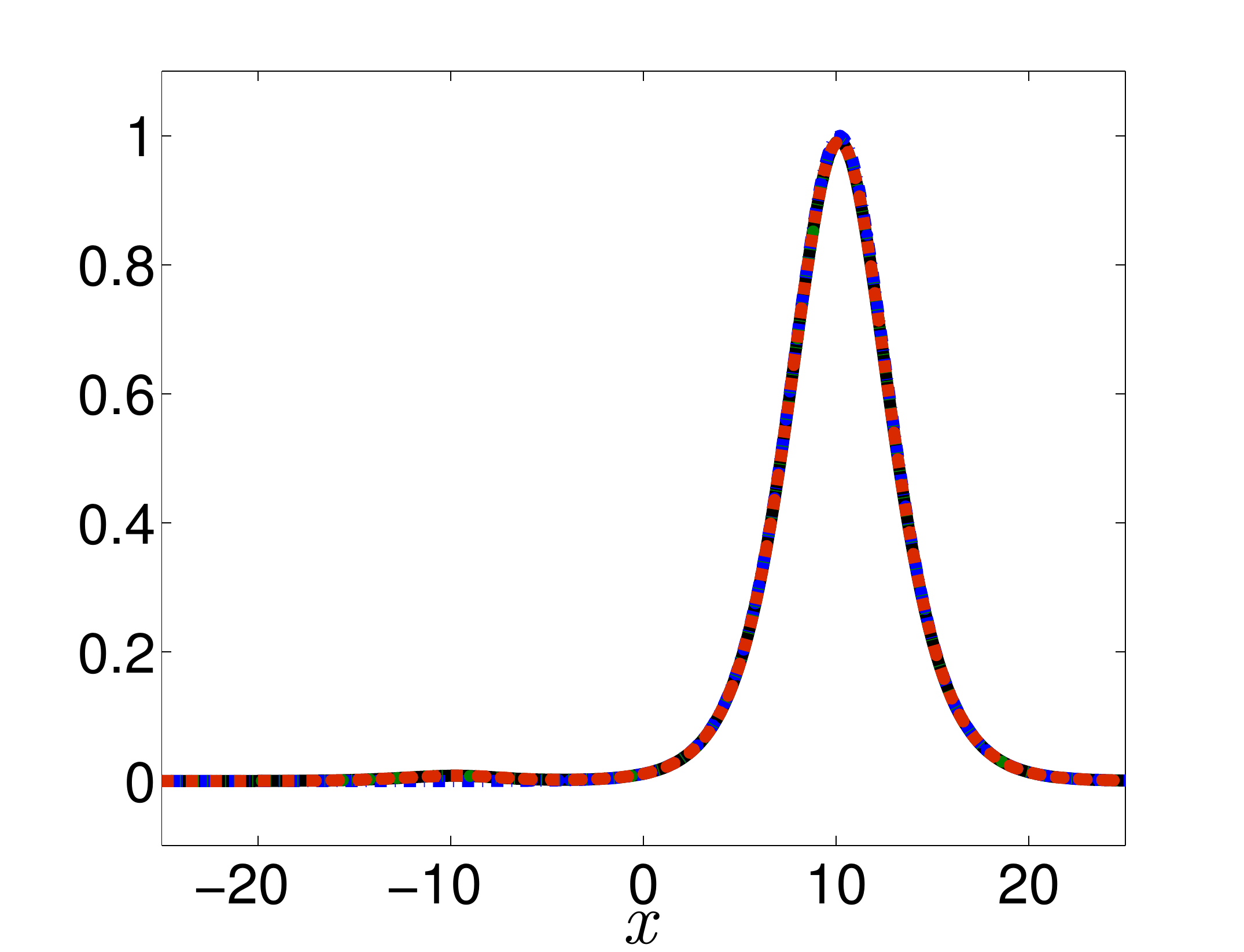}  \\
  \mbox{\footnotesize \bf (a) \scriptsize $U_{{\rm num}}$({\color{black} \bf ---}) \  $U_{1}$({\color{blue} \bf -.-}) \ $U_{12}$({\color{KMgreen} \bf $\cdot\cdot\cdot$}) \  $U_{2}$ ({\color{Brown} \bf - -}) }   &
  \mbox{\footnotesize \bf (b) \scriptsize $U_{{\rm num}}$({\color{black} \bf ---}) \  $U_{1}$({\color{blue} \bf -.-}) \ $U_{12}$({\color{KMgreen} \bf $\cdot\cdot\cdot$}) \  $U_{2}$ ({\color{Brown} \bf - -}) }     \\
\includegraphics[width=2.8in
]{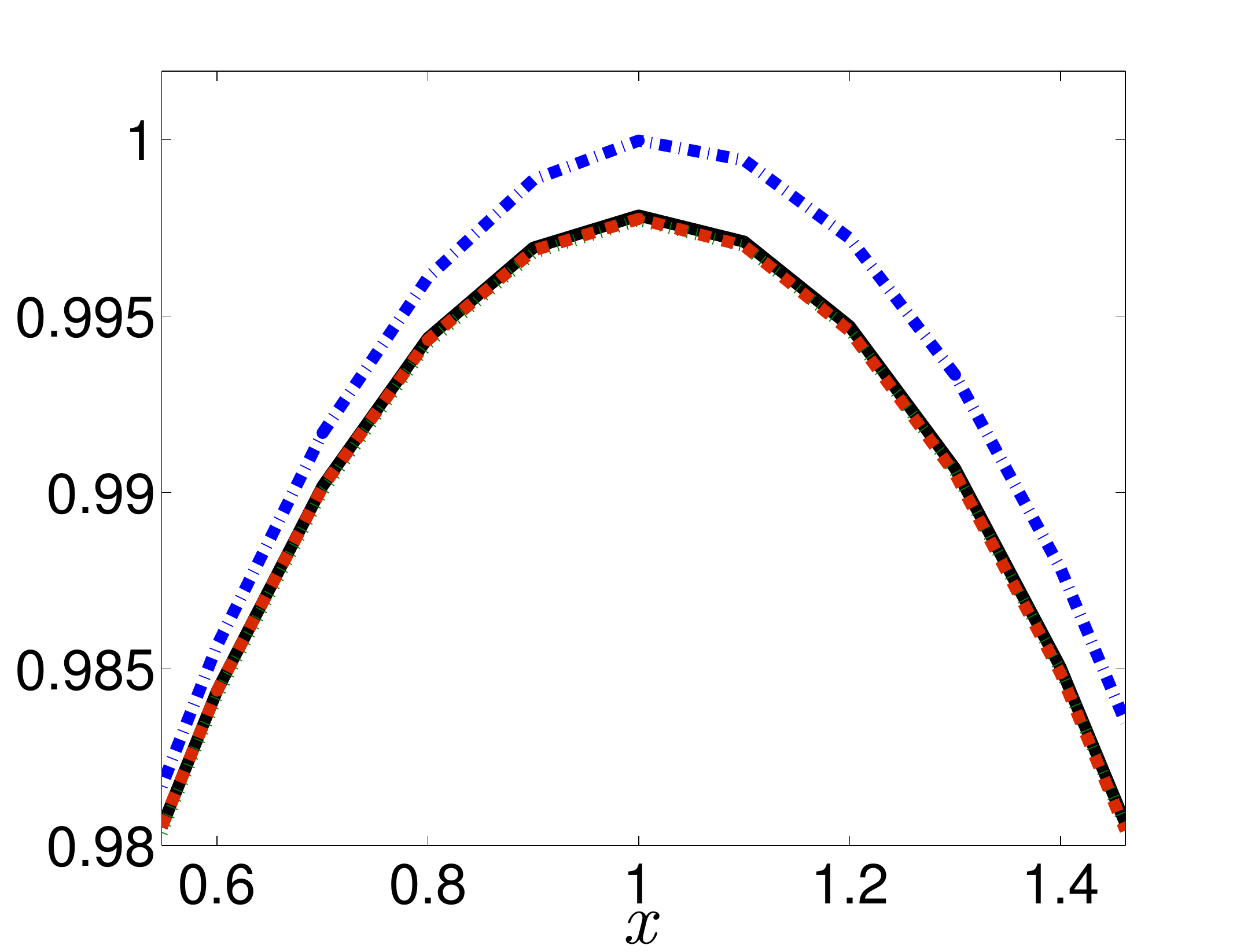} &
\includegraphics[width=2.8in
]{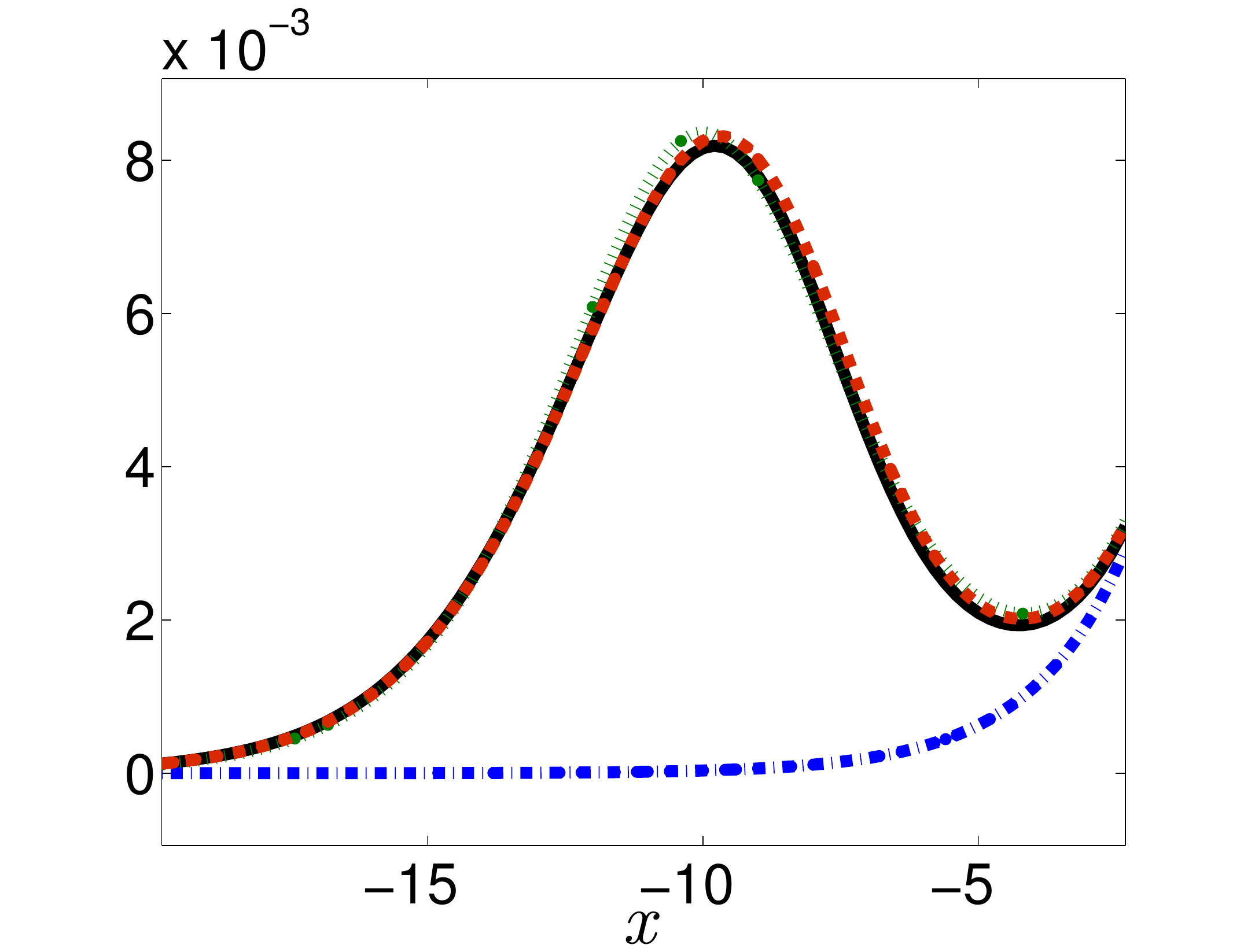}  \\
  \mbox{\footnotesize \bf (c) \scriptsize $U_{{\rm num}}$({\color{black} \bf ---}) \  $U_{1}$({\color{blue} \bf -.-}) \ $U_{12}$({\color{KMgreen} \bf $\cdot\cdot\cdot$}) \  $U_{2}$ ({\color{Brown} \bf - -}) }   &
  \mbox{\footnotesize \bf (d) \scriptsize $U_{{\rm num}}$({\color{black} \bf ---}) \  $U_{1}$({\color{blue} \bf -.-}) \ $U_{12}$({\color{KMgreen} \bf $\cdot\cdot\cdot$}) \  $U_{2}$ ({\color{Brown} \bf - -}) }     \\
\includegraphics[width=2.8in
]{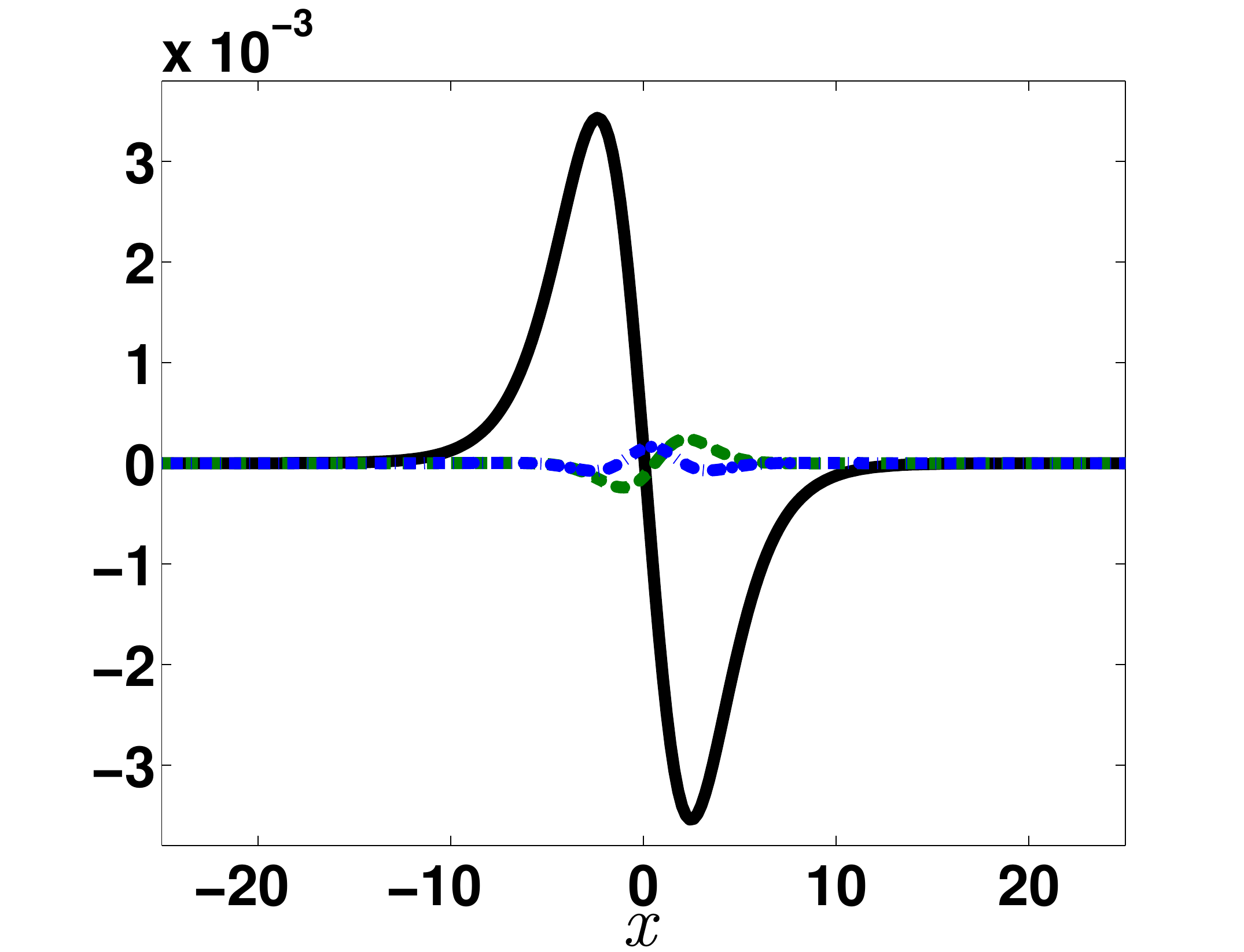}  &
\includegraphics[width=2.8in
]{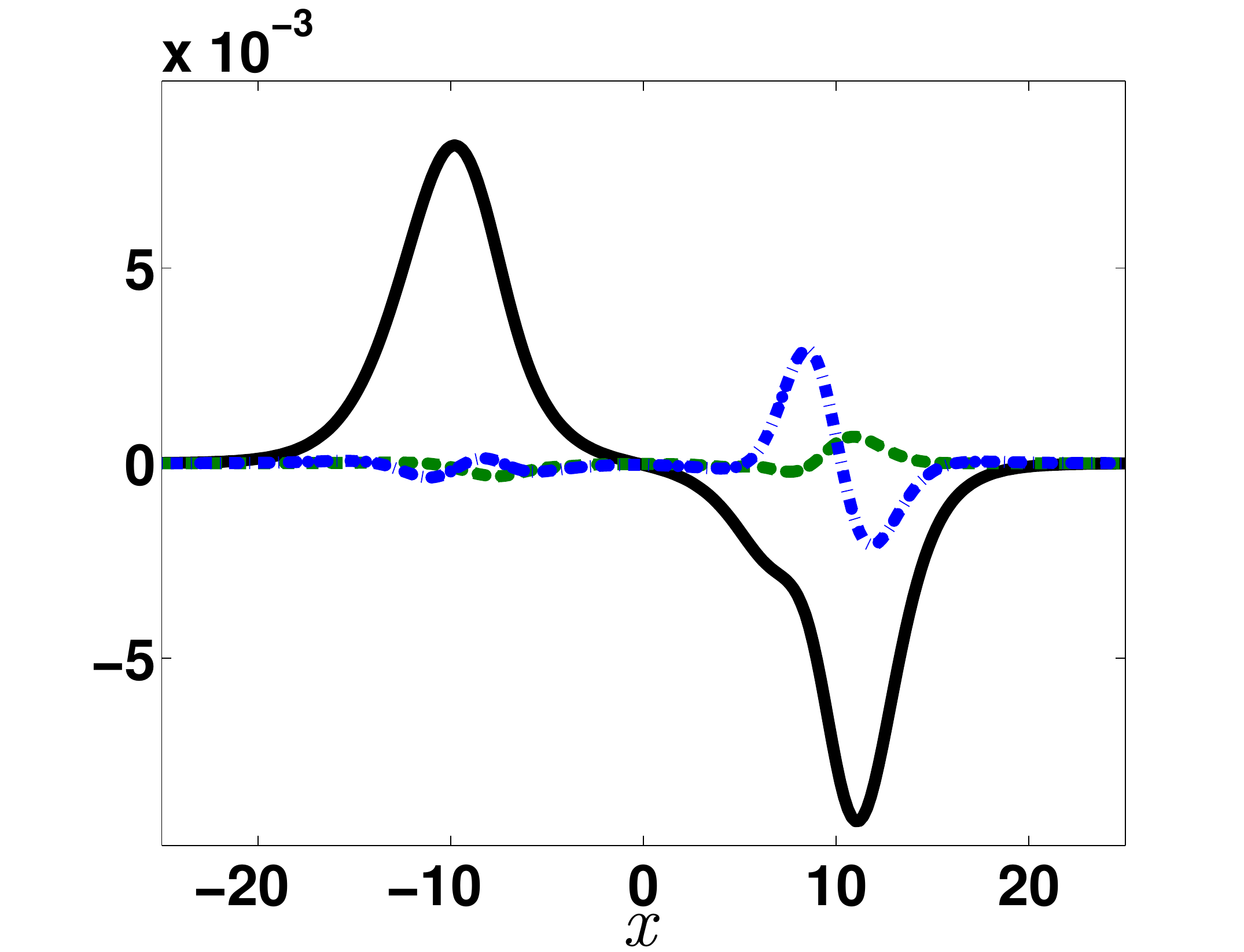} \\
\mbox{\footnotesize \bf (e) \scriptsize $(U_{{\rm num}} - U_{1})$({\color{black} \bf ---}) \  $(U_{{\rm num}} - U_{12})$({\color{blue} \bf -.-})   }   &
\mbox{\footnotesize \bf (f) \scriptsize $(U_{{\rm num}} - U_{1})$({\color{black} \bf ---}) \  $(U_{{\rm num}} - U_{12})$({\color{blue} \bf -.-})  }   \\
\mbox{\footnotesize \bf  \scriptsize   $(U_{{\rm num}} - U_2)$({\color{KMgreen} \bf - -}) }   &
\mbox{\footnotesize \bf  \scriptsize   $(U_{{\rm num}} - U_2)$({\color{KMgreen} \bf - -}) }
\end{array}$
\end{center}
\vspace{-3mm}
\caption{\small Comparison of the weakly nonlinear solutions $U_1$, $U_{12}$, and $U_2$
with the numerical solution $U_{\rm num}$ for $k=1/\sqrt{3}$, $\epsilon=0.1$, $\gamma = 0$
at (a) $t = 1 \ \& \ $(b) $t = 1/\epsilon$, with the close-up of some areas (c) \& (d) and the error plots (e) \& (f) at the respective times.
Numerical parameters: $\Delta t = 0.01$, $\Delta T = 0.00125$ and $L=2000$, $N=2\times10^4$.}
 \label{figure:WNS_BE_phi_num}
\end{figure}

The important question is now: how does the error of the solutions $U_1$, $U_{12}$ and $U_2$
scale with $\epsilon$? To analyse the error of the weakly nonlinear solutions in more detail,
we consider the maximum absolute error over $x$ defined as
\begin{eqnarray}
e^{i}_t = \max_{-L \leq x \leq L}  | U_{\rm num}(x,t) - U_{i} (x,t)|, \quad i = 1, 12, 2.
\label{eqn:WNS_pert_num_error}
\end{eqnarray}
We use a least squares power fit to determine how the maximum absolute error of
the weakly nonlinear solution at each order varies with the small parameter $\epsilon$.
First we write the maximum errors defined in (\ref{eqn:WNS_pert_num_error}) in the form
\begin{eqnarray}
e^{i}_{t} &=& C_i \epsilon^{\alpha_i}, \quad {\rm for}  \quad i=1, 12, 2,\
\label{eqn:WNS_pert_num_error_rewritten}
\end{eqnarray}
corresponding to each order of $\epsilon$ in the  weakly nonlinear solution (\ref{eqn:WNS_num}).
Taking logs of the errors in this form and considering a range of $\epsilon$, one can 
find the coefficients $C_i$ and $\alpha_i$, with the latter revealing how the maximum absolute 
errors scale with $\epsilon$. We find the coefficients using Matlab's ``{\it polyfit}" command.

\begin{figure}[h]
\begin{center}$
\begin{array}{cc}
\quad t=1 & \quad t=1/\epsilon  \\
\includegraphics[width=3.4in]{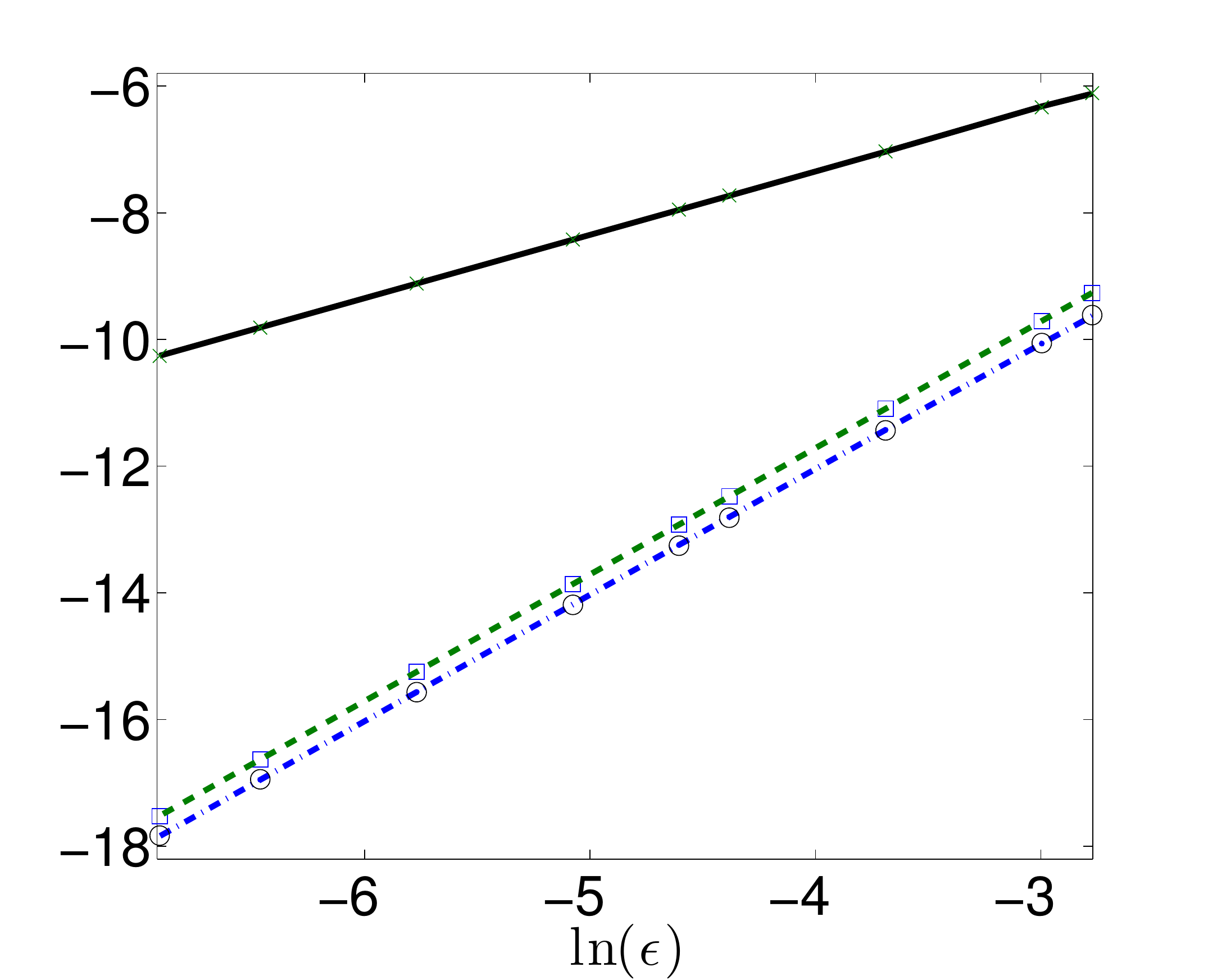}
&
\includegraphics[width=3.3in,height=2.75in]{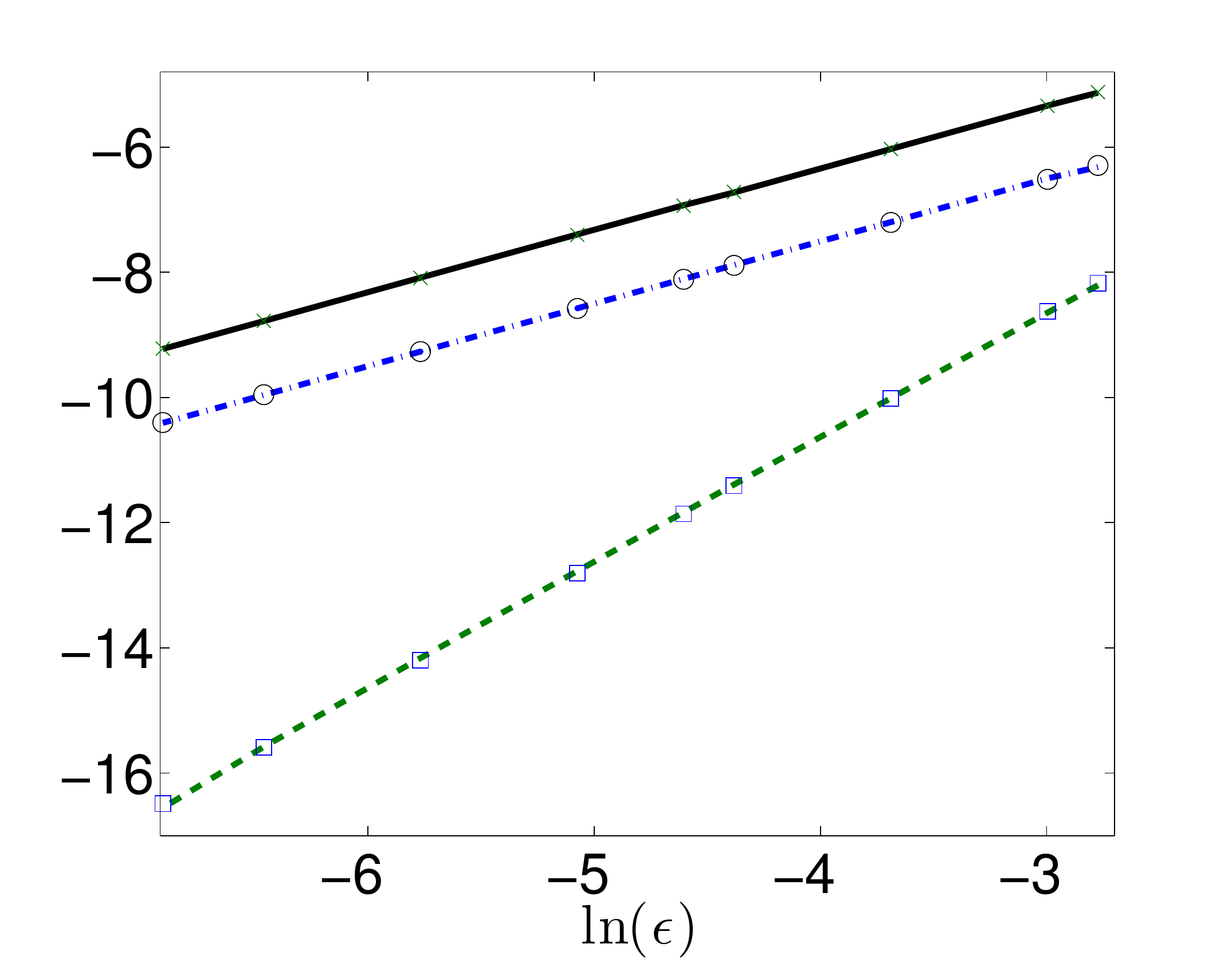}
\\
\multicolumn{2}{c}{  \quad  \mbox{\bf \scriptsize
${\rm ln}(e^{1}_{t})$({\color{black} \bf ---}) \ \ \ \  $\alpha_{1} {\rm ln}(\epsilon) + {\rm ln}(C_{1})$({\color{KMgreen}  $\times \ \times$}) \ \ \ \
${\rm ln}(e^{12}_{t})$({\color{blue} \bf -.-}) \ \ \ \  $\alpha_{12} {\rm ln}(\epsilon) + {\rm ln}(C_{12})$({\color{black} $\bigcirc \ \bigcirc$})
} }\\
\multicolumn{2}{c}{  \quad  \mbox{\bf \scriptsize
${\rm ln}(e^{2}_{t})$({\color{KMgreen} \bf - -}) \ \ \ \    $\alpha_{2} {\rm ln}(\epsilon) + {\rm ln}(C_{2})$({\color{blue} $\square \ \square$})
}} \\
\vspace{-7mm}
\end{array}$
\end{center}
\caption{\small Double log plot of absolute errors versus $\epsilon$
at $t =1$ and $t =1/\epsilon$. 
Numerical parameters are the same as in Figure \ref{figure:WNS_BE_phi_num}. 
Coefficients $\alpha_{1,12,2}$ and $C_{1,12,2}$ are
given in Table \ref{tab:error_scale}.}
\label{figure:error_eps_T=1}
\end{figure}
%

%
%
%
%

%
%
%
%
%

In Figure \ref{figure:error_eps_T=1} we display double log plots of the maximum
absolute errors, which we find explicitly from simulations
against $\epsilon$, and the log of the errors as
defined in (\ref{eqn:WNS_pert_num_error_rewritten}), both at times $t = 1$ (left) and $t=1/\epsilon$ (right).
As depicted in Figure \ref{figure:error_eps_T=1}, we find the weakly nonlinear
solution $U_2$ dramatically improves the scaling of the maximum absolute error at the time 
$t = 1/\epsilon$, in comparison with the solution $U_{12}$. Furthermore, the maximum absolute error of $U_2$
scales almost precisely as $\mathcal{O}(\epsilon^2)$, in agreement with the rigorous error estimates
of Theorem \ref{theorem-main}.

We note from Figure \ref{figure:error_eps_T=1} the strong similarity
in the error scalings of the leading order solution $U_1$
with the higher order solution $U_{12}$ at $ t = 1 / \varepsilon$.
This was not observed in \cite{Karima1,Karima2} since the detailed analysis of the error
term was not undertaken. However, it must be noted, upon direct comparison of the
two solutions, that the maximum absolute error of the  solution
$U_{12}$  is still significantly smaller than the error of the solution $U_1$.
This is because the constant $C_{12}$ in the  error term (\ref{eqn:WNS_pert_num_error_rewritten})
is substantially smaller than the corresponding leading order constant $C_1$ 
(see Table \ref{tab:error_scale}).

\begin{table*} [htbp]
\centering
\begin{tabular}{c |  c  c  c  c }  \hline \vspace{-4.5mm} \\
&\multicolumn{2}{c}{{\ \ \rm Boussinesq equation} \ \ \ \ \ \ \ \ \ \ \ \ } &  \multicolumn{2}{c}{ \rm Boussinesq--Ostrovsky equation}  \\ \hline
 & {\rm $t=1$}  & {\rm $t=1/\epsilon$}   & {\rm  $t=1$}   & {\rm $t=1/\epsilon$}  \\ \hline
$\alpha_1$& \  \  1.0032  \ \   &  \  \  0.9906   \  \  &  1.0026  &    0.9690    \\
$\alpha_{12}$&  1.9870 & 0.9928                  &  \  \  ---   \ \ & \ \ --- \ \ \\
$\alpha_{2}$& 1.9970 & 2.0103                     &  1.9302  & 1.9515 \\
$C_1$       &  0.03576 & 0.04979                   &  0.1668 &   0.4851  \\
$C_{12}$ & 0.01648 & 0.02896                    &  ---  & --- \\
 $C_2$     & 0.02407 & 0.07417                  & 0.5474  &  9.2062  \\
\hline
\end{tabular}
\caption{\small Maximum absolute error scaling parameters corresponding to results illustrated in Figures \ref{figure:error_eps_T=1} \& \ref{figure:error_eps_BOE_T=1}.}
\label{tab:error_scale}
\end{table*}

\subsection{Boussinesq--Ostrovsky equation}
%
Let us now consider an initial-value problem for the regularized Boussinesq--Ostrovsky equation (\ref{BO})
starting with initial data
\begin{eqnarray}
\left\{ \begin{array}{l}
U|_{t=0} = 3k^2 \ {\rm sech}^2 (\frac{kx}{2}) -  \hat{\alpha} [ {\rm sech}^2\left  (\frac{k(x+x_0)}{2}\right ) +  {\rm sech}^2 \left  (\frac{k(x-x_0)}{2} \right ) ],\\
U_t|_{t=0} =  3k^3 \ {\rm sech}^2 (\frac{kx}{2})  \ {\rm tanh} (\frac{kx}{2})
  -  k \hat{\alpha}
\Big[ {\rm sech}^2 \left ( \frac{k(x+x_0)}{2} \right ) \ {\rm tanh} \left (\frac{k(x+x_0)}{2} \right )  \\
\hspace{1.2cm} +  {\rm sech}^2 \left (\frac{k(x-x_0)}{2} \right ) \ {\rm tanh} \left (\frac{k(x-x_0)}{2} \right ) \Big], \end{array} \right.
\label{BO_2}
\end{eqnarray}
where $x_0$ is an arbitrary shift along $x$. The initial data is defined
on the periodic domain $-L \leq x \leq L$. If we choose $\hat{\alpha}$ to be 
$$
\hat{\alpha} = \frac{3k^2 \ {\rm tanh}(kL/2) }{   {\rm tanh}(k(L+x_0)/2) +  {\rm tanh}(k(L-x_0)/2)   },
$$
then the mean value of $U|_{t = 0}$ is zero, $c_0 = 0$. 

The weakly nonlinear solution is still given by the expansion (\ref{weakly-nonlinear-solution})
but the leading order term $f^-$ is now a solution to the Ostrovsky equation (\ref{KdV-Ost})
starting with the initial data
\begin{eqnarray}
f^-|_{T=0} = 3k^2 \ {\rm sech}^2 \left (\frac{k \xi^- }{2}\right ) -   \hat \alpha    \Big[ {\rm sech}^2 \left (\frac{k(\xi^-+x_0)}{2} \right)
+ {\rm sech}^2 \left (\frac{k(\xi^- -x_0)}{2} \right ) \Big].
\label{BOE_f_minus}
\end{eqnarray}
The higher order terms $\phi^{\pm}$ satisfy the linearised Ostrovsky equations
\begin{eqnarray}
 \partial_{\xi_+} \phi^+  \left (-2 \partial_T \phi^+ + \partial_{\xi_+}^3 \phi^+ \right ) = \gamma \phi^+,
\label{BOE_phi_plus}
\end{eqnarray}
and
\begin{eqnarray}
\partial_{\xi^-} \left( 2 \partial_T \phi^- +
\partial_{\xi_-}^3 \phi^- + \partial_{\xi_-} (f^- \phi^-) \right)
= \gamma \phi^-   + \partial_T^2 f^- + 2 \partial_T \partial_{\xi_-}^3 f^-,
\label{BOE_phi_minus}
\end{eqnarray}
starting with the initial data
\begin{eqnarray}
\left\{ \begin{array}{l}
\phi^+|_{T=0} = - \Phi(\xi_+), \\
\phi^-|_{T=0} =  \Phi(\xi_-), \end{array} \right . ,
\end{eqnarray}
where $\Phi$ can be expressed in terms of the leading order solution $f^-$ as
\begin{equation}
\Phi(x) = \frac{1}{2} \left( \int^{x}_{-L} f^-_T(s)ds -
\frac{1}{2L} \int^{L}_{-L} \left(\int^{y}_{-L}  f^-_T(s)ds  \right)dy    \right) \biggr|_{T=0}.
\label{Phi}
\end{equation}
The function $f^-$ is a solution to the Ostrovsky equation (\ref{KdV-Ost}), 
and therefore the derivative $f^-_{T}$ in (\ref{Phi}) can be readily expressed as
$$
f^-_{T} = \frac 12 \left( - f^- \partial_{\xi^-} f^- - \partial_{\xi^-}^3 f^- + \gamma \partial_{\xi^-}^{-1} f^- \right).
$$
Then, the formula (\ref{Phi}) shows, in particular, that the magnitude of the higher order 
corrections will be smaller if the initial data for the function $f^-$ is localised, 
and for smaller values of the constant $\gamma$. To simplify our numerical 
 simulation, we choose the initial data accordingly.

The Cauchy problems for equations (\ref{KdV-Ost}), (\ref{BOE_phi_plus}), and (\ref{BOE_phi_minus}) are solved
numerically and simultaneously at each time step.
The Ostrovsky equation (\ref{KdV-Ost}) can be solved numerically using both 
pseudo-spectral and finite difference methods (e.g. \cite{GH,Miyatake,OS,Yaguchi}). 
We extend the spectral method used to solve the KdV equation in \cite{tref} 
in order to solve the linearised Ostrovsky equations (\ref{BOE_phi_plus}) and (\ref{BOE_phi_minus}). 
The method used to solve the Boussinesq-Ostrovsky equation is an extension to the 
numerical method in \cite{Engelbrecht}.

We now compare direct numerical simulations
of the Cauchy problem for the Boussinesq--Ostrovsky equation (\ref{BO}) with the weakly nonlinear solutions
$U_1$ and $U_2$ defined by (\ref{eqn:WNS_num}), as well as analyze the approximation error from
(\ref{eqn:WNS_pert_num_error}) and (\ref{eqn:WNS_pert_num_error_rewritten}).

The top panels of Figure \ref{figure:WNS_num_BOE} depict
the evolution of the numerical solution $U_{\rm num}$
and the weakly nonlinear solutions $U_1$ and $U_2$ at times $t = 1$ (left) and $t = 1/\epsilon$ (right).
The middle panels of Figure \ref{figure:WNS_num_BOE} show the close-up of the area 
near the maximum of the right-propagating wave at $t = 1$ (left) and of the small left-propagating wave 
at $t = 1/\epsilon$ (right). We again note, similarly to our first example, that the leading order approximation 
$U_1$ does not capture the generation of this left-propagating wave at all. We also note the important 
qualitative change in the dynamics of the solution, compared to the Boussinesq equation: 
the initial data generates  the right-propagating nonlinear wave packet instead of a solitary wave, 
which agrees with the well known behaviour of solutions of the Ostrovsky equation \cite{GH}.
The bottom panels of Figure \ref{figure:WNS_num_BOE} illustrate
the evolution of errors for each of the weakly nonlinear solutions $U_1$ and $U_2$ relative
to the numerical solution $U_{\rm num}$, for a particular $\epsilon$.
One can notice a distinct improvement in the accuracy of the  weakly nonlinear solution $U_2$
compared with $U_1$.

%
%
%
%
%
%
%
%
%
%
%
%
%
%
%
\begin{figure}[htbp]
\begin{center}$
\begin{array}{ccc}
\quad t=1 & \quad t=1/\epsilon  \\
\includegraphics[width=2.8in
]{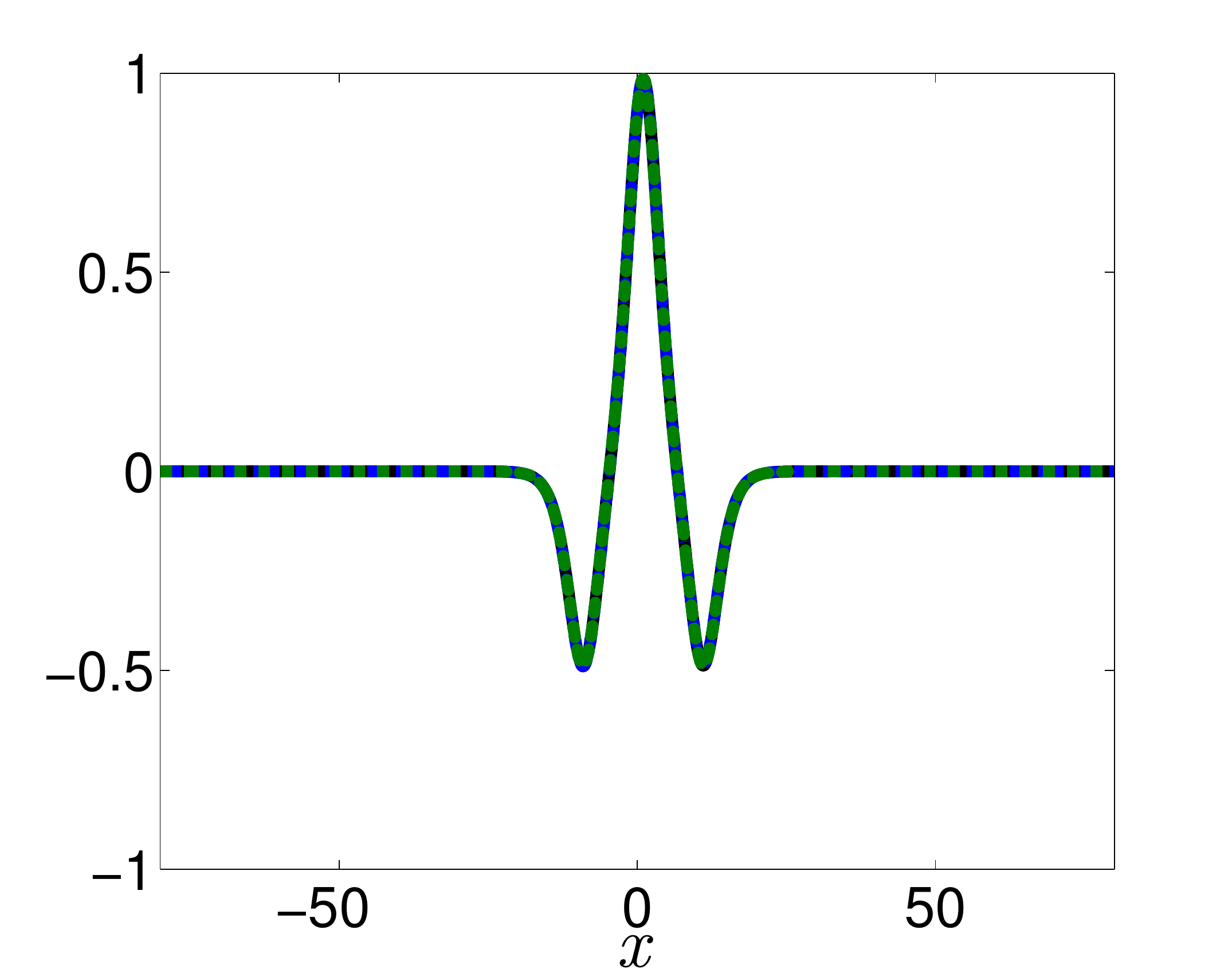} &
\includegraphics[width=2.8in
]{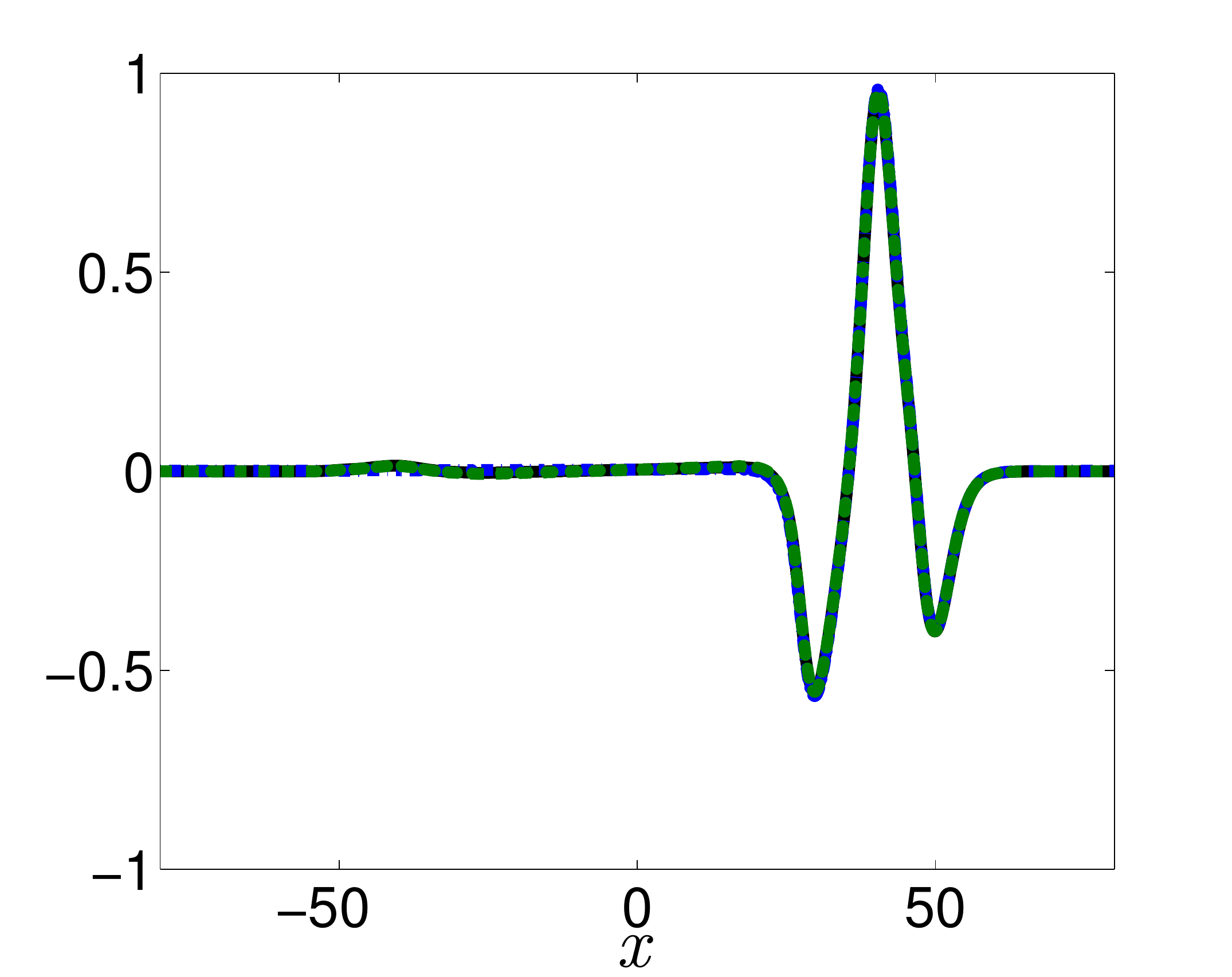}  \\
  \mbox{\footnotesize \bf (a) \scriptsize $U_{{\rm num}}$({\color{black} \bf ---}) \  $U_{1}$({\color{blue} \bf -.-}) \ $U_{2}$({\color{KMgreen} \bf - -}) }  &
  \mbox{\footnotesize \bf (b) \scriptsize $U_{{\rm num}}$({\color{black} \bf ---}) \  $U_{1}$({\color{blue} \bf -.-}) \ $U_{2}$({\color{KMgreen} \bf - -}) }    \\
\includegraphics[width=2.8in
]{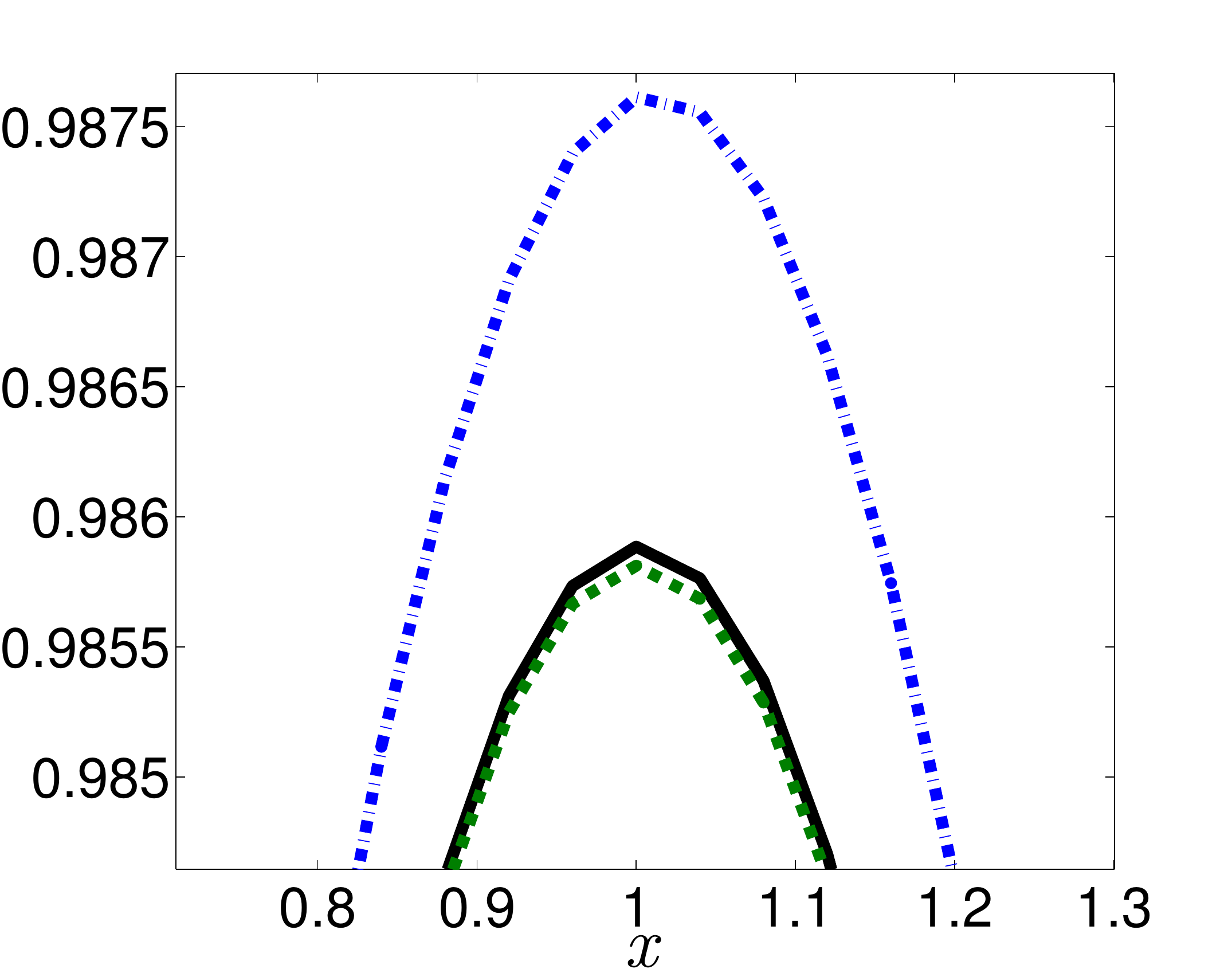} &
\includegraphics[width=2.8in
]{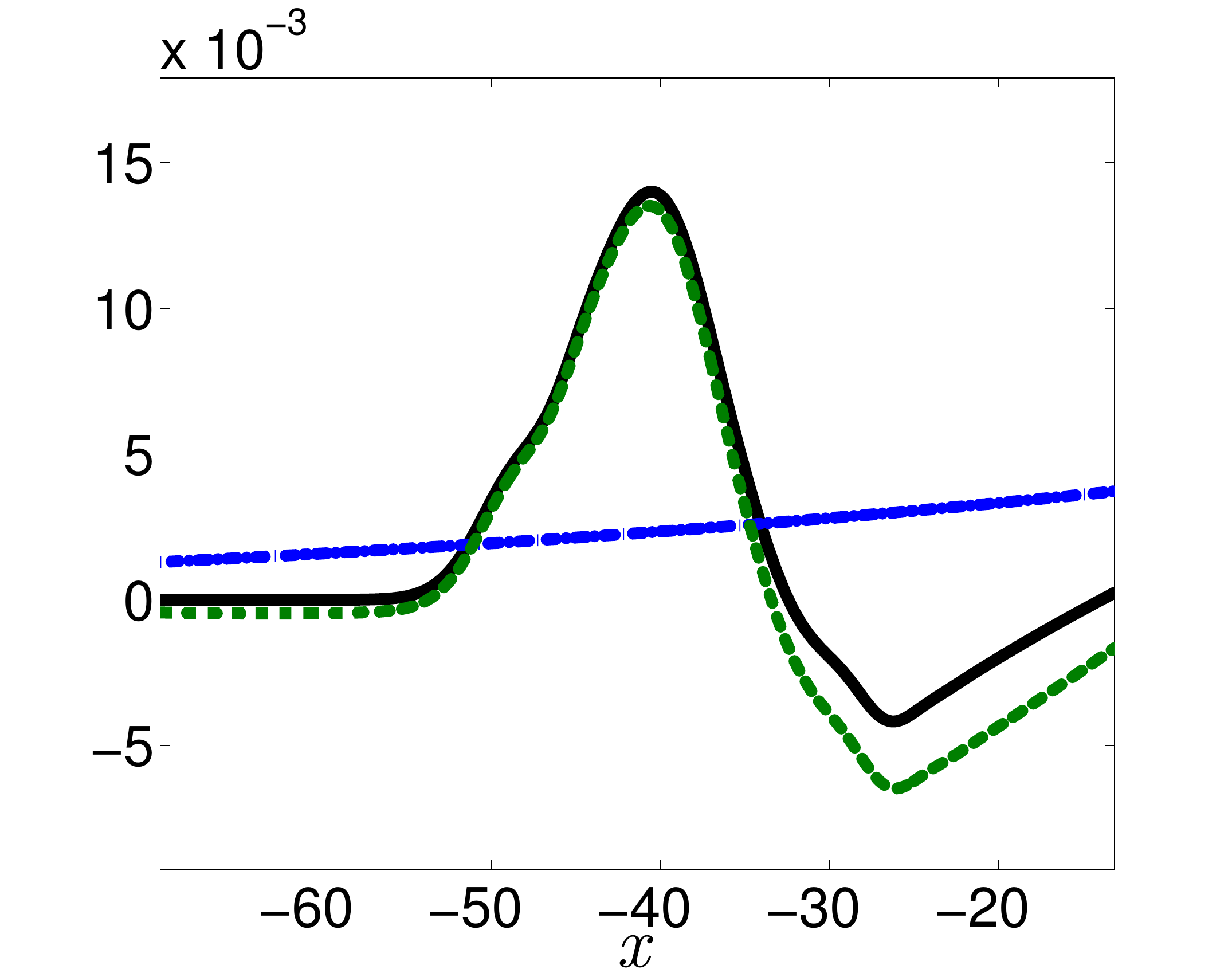}  \\
  \mbox{\footnotesize \bf (c) \scriptsize $U_{{\rm num}}$({\color{black} \bf ---}) \  $U_{1}$({\color{blue} \bf -.-}) \ $U_{2}$({\color{KMgreen} \bf - -}) }&
  \mbox{\footnotesize \bf (d) \scriptsize $U_{{\rm num}}$({\color{black} \bf ---}) \  $U_{1}$({\color{blue} \bf -.-}) \ $U_{2}$({\color{KMgreen} \bf - -}) }   \\
\includegraphics[width=2.8in
]{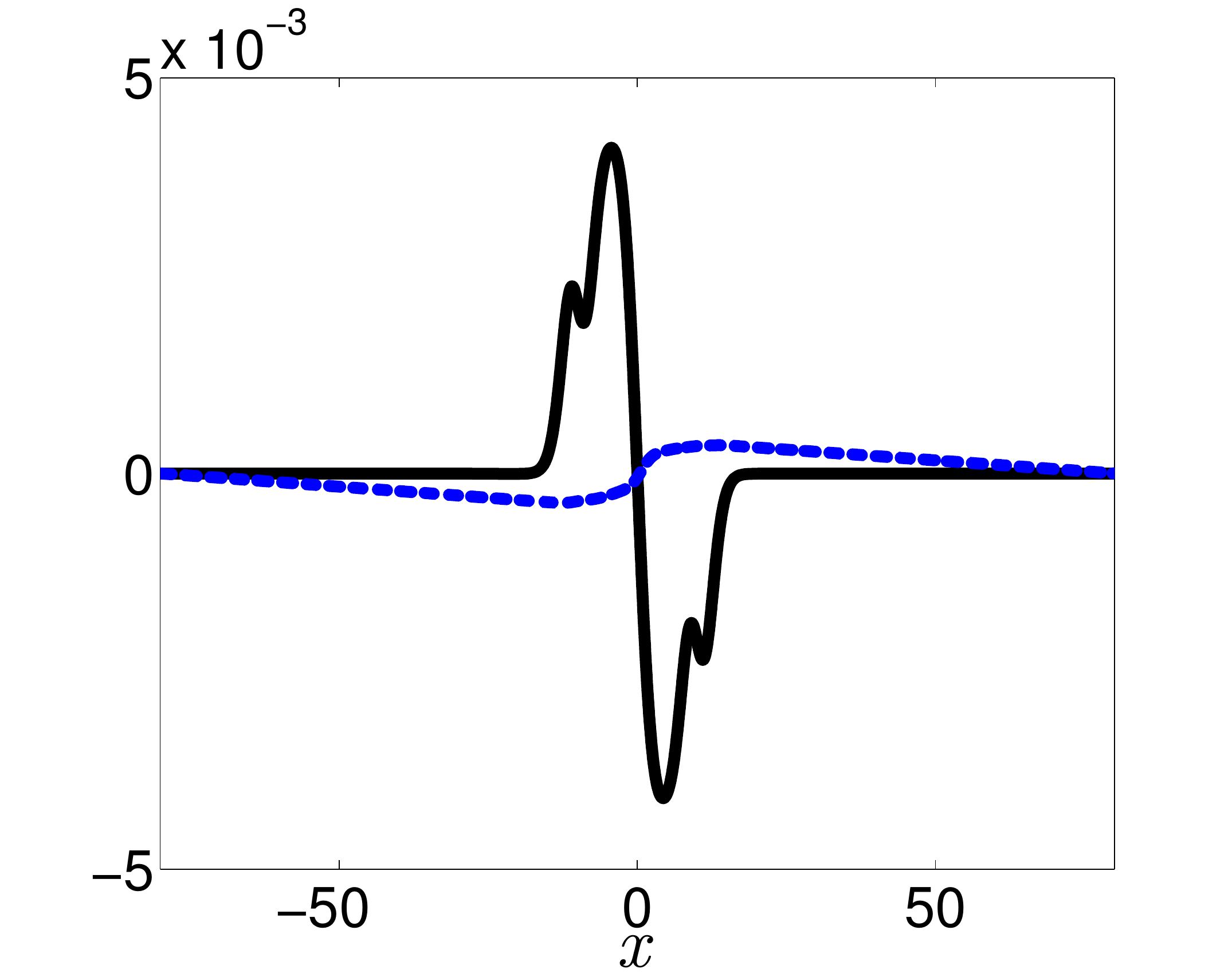}  &
\includegraphics[width=2.8in
]{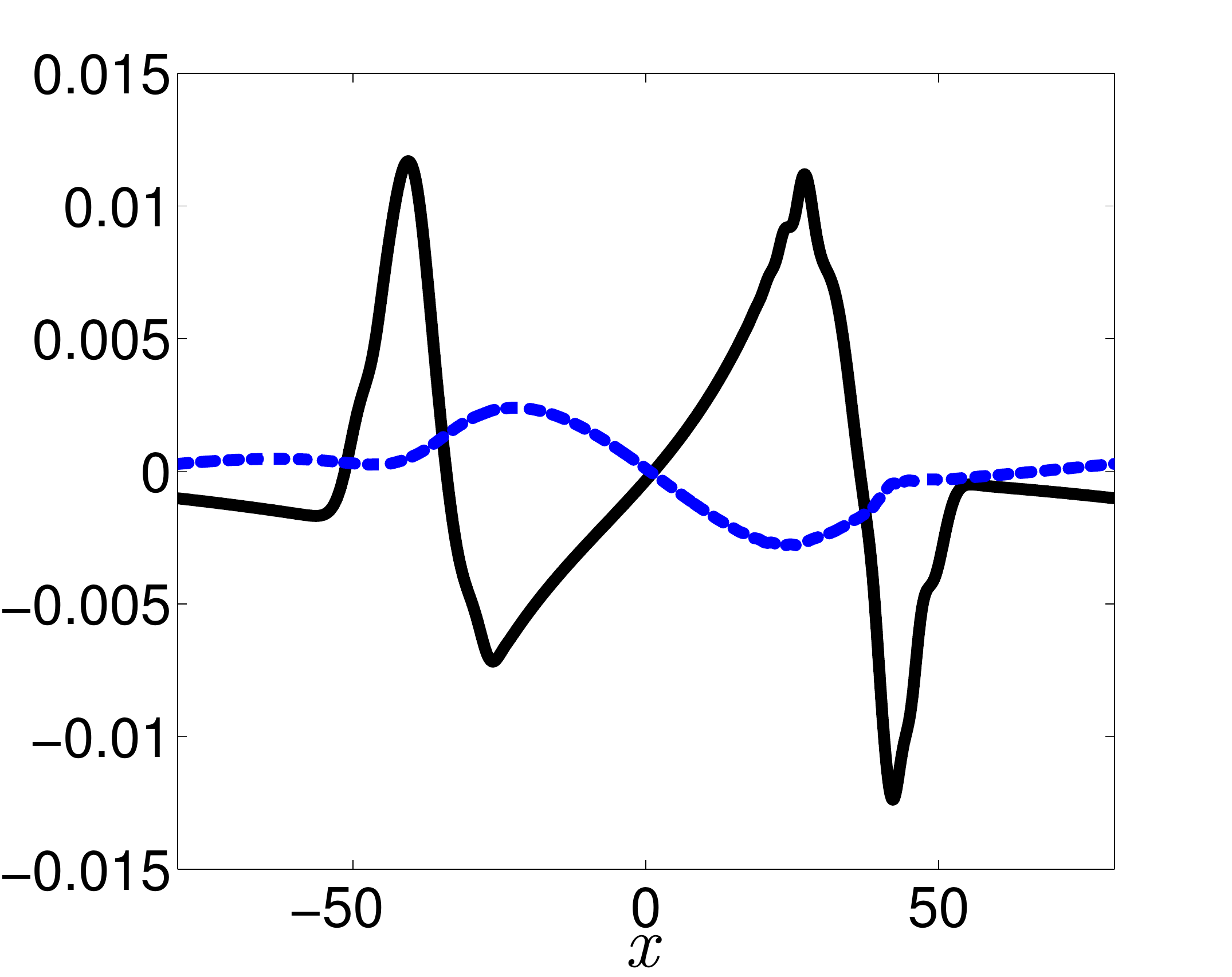} \\
\mbox{\footnotesize \bf (e) \scriptsize $(U_{{\rm num}} - U_{1})$({\color{black} \bf ---}) \  $(U_{{\rm num}} - U_{2})$({\color{blue} \bf - -})   }   &
\mbox{\footnotesize \bf (f) \scriptsize $(U_{{\rm num}} - U_{1})$({\color{black} \bf ---}) \  $(U_{{\rm num}} - U_{2})$({\color{blue} \bf - -})   }
\end{array}$
\end{center}
\vspace{-3mm}
\caption{\small Comparison of the weakly nonlinear solutions $U_1$ and $U_2$ with the numerical solution $U_{\rm num}$ for $k=1/\sqrt{3}$, $\epsilon=0.025$, $\gamma=0.1$
at (a) $t = 1 \ \& \ $(b) $t = 1/\epsilon$, with the close-up of some areas (c) \& (d) and the error plots (e) \& (f) at the respective times.
Numerical parameters: $\Delta t = 0.01$, $\Delta T = 0.000125$ and $L=80$, $N=4\times10^4$.}
 \label{figure:WNS_num_BOE}
\end{figure}

Figure \ref{figure:error_eps_BOE_T=1} displays double log plots of
the maximum absolute error found explicitly from simulations and the log of the
absolute maximum error as defined by (\ref{eqn:WNS_pert_num_error_rewritten}).
As one can see, the leading order maximum absolute error scales as $\mathcal{O}(\epsilon)$ for $U_1$
and as $\mathcal{O}(\epsilon^2)$ for $U_2$. This scaling agrees with the error estimates in Theorem \ref{theorem-main-2}.
It must be noted, however,  that the constant $C_2$ of the higher-order error term
(\ref{eqn:WNS_pert_num_error_rewritten}) is substantially larger than the corresponding
leading order error term constant $C_1$, and both constants are larger than in the case of the Boussinesq equation (see Table \ref{tab:error_scale}).  Therefore, although the error term scales as
$\mathcal{O}(\varepsilon^2)$, as expected, it is substantially larger than in the case of $\gamma = 0$
(compare the figures
 \ref{figure:WNS_BE_phi_num} (f) and  \ref{figure:WNS_num_BOE} (f) and note the difference in the values of $\epsilon$).

%
%
%
%
%
%
\begin{figure}[h]
\begin{center}$
\begin{array}{cc}
\quad t=1 & \quad t=1/\epsilon  \\
\includegraphics[width=3.4in]{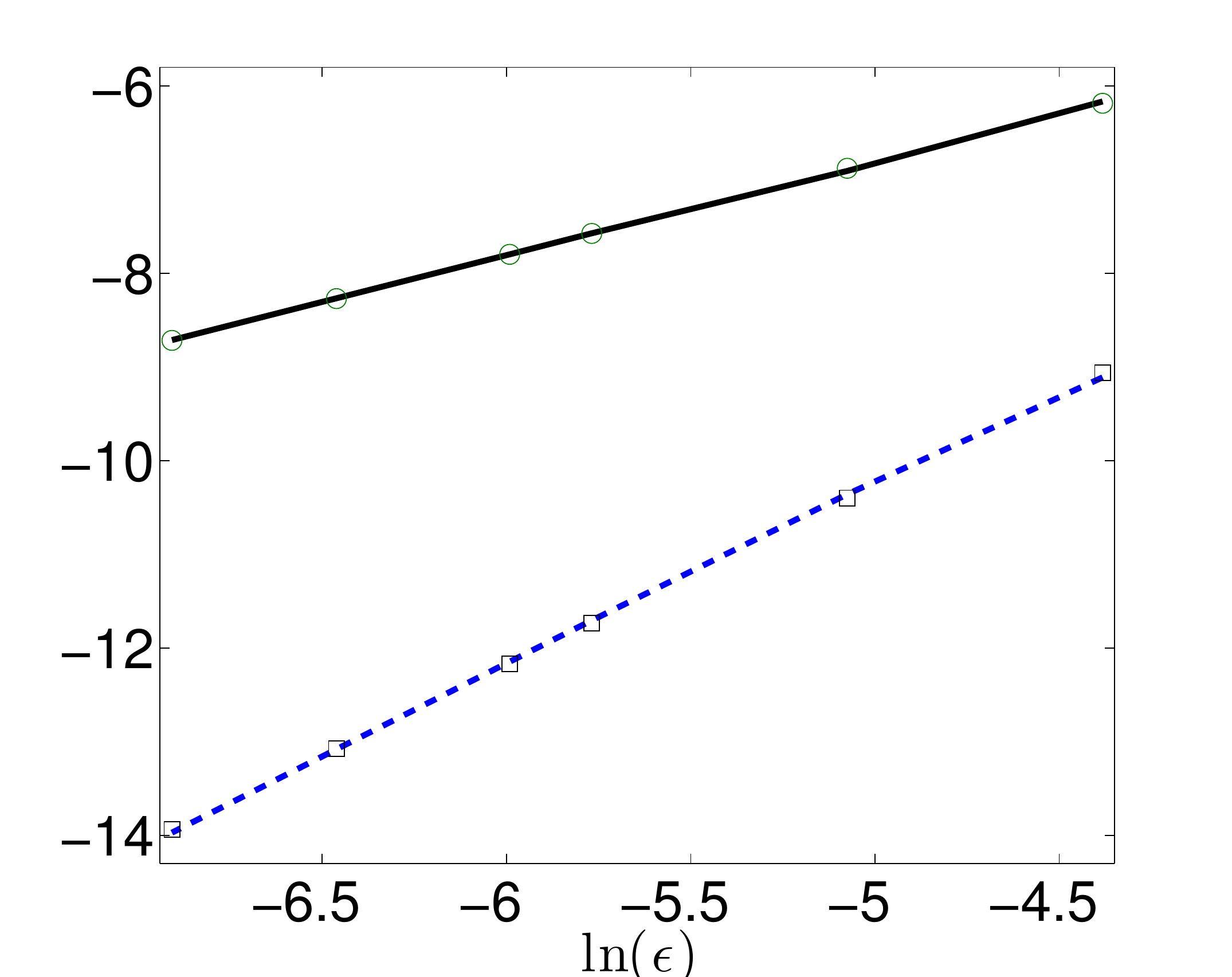}
&
\includegraphics[width=3.3in,height=2.75in]{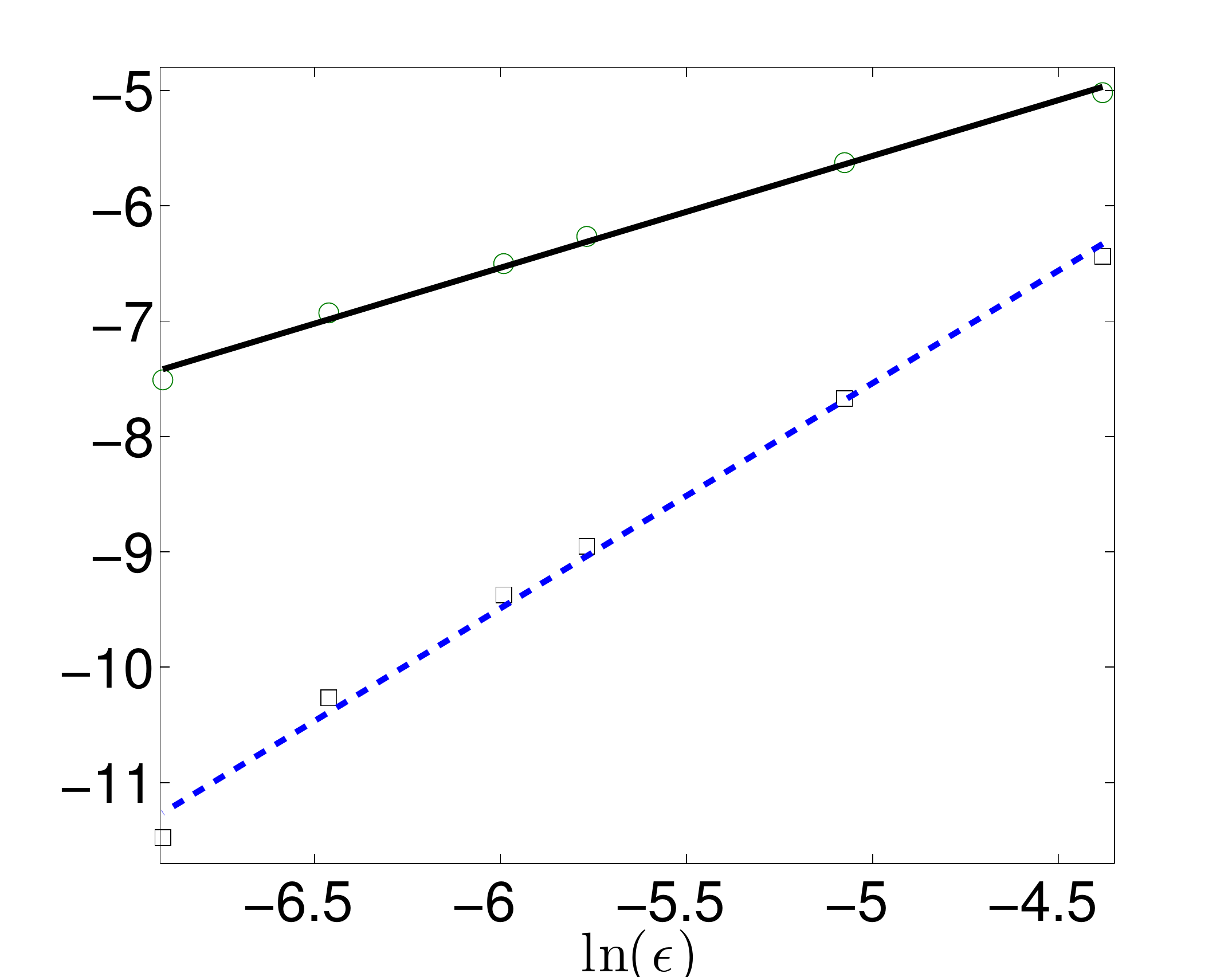}
\\
\multicolumn{2}{c}{  \quad  \mbox{\bf \scriptsize
${\rm ln}(e^{1}_{t})$({\color{black} \bf ---}) \ \ \ \  $\alpha_{1} {\rm ln}(\epsilon) + {\rm ln}(C_{1})$({\color{KMgreen}  $\bigcirc \ \bigcirc$}) \ \ \ \
${\rm ln}(e^{2}_{t})$({\color{blue} \bf - -}) \ \ \ \  $\alpha_{2} {\rm ln}(\epsilon) + {\rm ln}(C_{2})$({\color{black} $\square \ \square$})
} }
\end{array}$
\end{center}
\caption{\small Double log plot of absolute errors versus $\epsilon$
at $t =1$ and $t =1/\epsilon$. 
Numerical parameters are the same as in Figure \ref{figure:WNS_num_BOE}. 
Coefficients $\alpha_{1,2}$ and $C_{1,2}$ are
given in Table \ref{tab:error_scale}.}
\label{figure:error_eps_BOE_T=1}
\end{figure}

\section{Conclusions}

In this paper we constructed a weakly nonlinear solution of the Cauchy problem for the
regularized Boussinesq--Ostrovsky equation in the periodic domain. The weakly nonlinear
solution is constructed in terms of  solutions of two uncoupled Ostrovsky equations,
extending the results obtained, within the accuracy of the {\it physical} problem formulation,
in \cite{Karima1, Karima2}. In our present paper, it was shown how the accuracy of the
weakly nonlinear solution can be improved by using the linearized Ostrovsky equations 
for the two functions present in the first-order correction term. Although our consideration exceeded the accuracy
of the physical problem formulation because the second-order corrections terms were not included
in the original Boussiness--Ostrovsky equation, the methodology can be generalized and
extended to the case when the main equation includes these higher order terms.

Analytical results have been illustrated numerically, for the regularized Boussinesq equation
in the large domain, and the regularized Boussinesq--Ostrovsky equation in the moderate domain.
The behaviour of the error terms has been studied in details and the numerical results have
shown excellent agreement with the analytical predictions.

\bigskip

{\bf Acknowledgement:} D.P. appreciates support and hospitality of the Department of
Mathematical Sciences of Loughborough University. The research was supported by
the LMS Scheme 2 grant  and by the Loughborough University School of Science small grant.
K.K. and K.M. thank C. Klein and D. Tseluiko for helpful discussions of the numerical methods.

\end{document}